\newtheorem{thm}{Theorem}
\newtheorem{lem}{Lemma}
\newtheorem{mydef}{Definition}
\newtheorem{cor}{Corollary}
\newtheorem{ex}{Example}
\newtheorem{asm}{Assumption}
\newtheorem{prop}{Proposition}
\newtheorem{rem}{Remark}
\title{Competitive Diffusion in Social Networks: \\ Quality or Seeding?}
\author{Arastoo Fazeli$^\dagger$  \quad Amir Ajorlou $^\dagger$  \quad Ali Jadbabaie$^\dagger$
\thanks{\noindent$^\dagger$Department of Electrical and Systems Engineering and GRASP Laboratory at University of Pennsylvania. {\tt\small arastoo@seas.upenn.edu}, {\tt\small ajorlou@seas.upenn.edu} and {\tt\small jadbabai@seas.upenn.edu.} This research was supported by ARO MURI W911NF-12-1-0509 and AFOSR Complex Networks Program.}
\thanks{ A preliminary version of this paper has appeared in \cite{fazeli2014}.}
}
\begin{document}

\maketitle
\thispagestyle{empty}
\pagestyle{empty}


\begin{abstract}

In this paper, we study a strategic model of marketing and product consumption in social networks. We consider two firms in a market competing to maximize the consumption of their products. Firms have a limited budget which can be either invested on the quality of the product or spent on initial seeding in the network in order to better facilitate spread of the product. After the decision of firms, agents choose their consumptions following a myopic best response dynamics which results in a local, linear update for their consumption decision. We characterize the unique Nash equilibrium of the game between firms and study the effect of the budgets as well as the network structure on the optimal allocation. We show that at the equilibrium, firms invest more budget on quality when their budgets are close to each other. However, as the gap between budgets widens, competition in qualities becomes less effective and firms spend more of their budget on seeding. We also show that given equal budget of firms, if seeding budget is nonzero for a balanced graph, it will also be nonzero for any other graph, and if seeding budget is zero for a star graph it will be zero for any other graph as well.
As a practical extension, we then consider a case where products have some preset qualities that can be only improved marginally. At some point in time, firms learn about the network structure and decide to utilize a limited budget to mount their market share by either improving the quality or new seeding some agents to incline consumers towards their products. We show that the optimal budget allocation in this case simplifies to a threshold strategy. Interestingly, we derive similar results to that of the original problem, in which preset qualities simulate the role that budgets had in the original setup.

\end{abstract}


\section{Introduction}

Many recent studies have documented the role of social networks in individual purchasing decisions
\cite{feick1987market,reingen1984brand,godes2004using}. More data from online social networks and advances in information technologies have drawn the attention of firms to exploit this information for their marketing goals. As a result, firms have become more interested in models of influence spread in social networks in order to improve their marketing strategies. In particular, considering the relationship between people in social networks and their rational choices, many retailers are interested to know how to use the information about the dynamics of the spread in order to maximize their product consumption and achieve the most profit in a competitive market.

A main feature of product consumption in these settings is what is often called the ``network effect'' or positive externality. For such products, consumption of each agent incentivizes the neighboring agents to consume more as well, as the consumption decisions between agents and their neighbors are strategic complements of each other.
There are diverse sets of examples for such products or services. New technologies and innovations, mobile applications (e.g., Viber, WhatsApp), online games (e.g., Warcraft), social network web sites (e.g., Facebook, Twitter) and online dating services (e.g., Zoosk, Match.com, OkCupid) are among many examples in which consuming from a common product or service is more preferable for people.

Also, a main property of many products is substitution. A substitute product is a product or service that satisfies the need of a consumer that another product or service fulfills (e.g. Viber and WhatsApp or Gmail and Yahoo email accounts). In all these examples, firms might be interested to utilize the social network among consumers and the positive externality of their products and services to incentivize a larger consumption of their products compared to rival substitute products. Therefore, it is important for firms to know how to shape their strategies in designing their products and offering them to a set of people in order to promote their products intelligently, and eventually achieving a larger share in the market.

In this paper, we study strategic competition between two firms trying to maximize their product consumption. Firms simultaneously allocate their fixed budgets between seeding a set of costumers embedded in a social network and designing the quality of their products. The consumption of each agent is the result of its myopic best response to the previous actions of its peers in the network. Therefore, a firm should provide enough incentives for spread of its product through the payoff that agents receive by consuming it. For this purpose and considering their budgets, firms should strategically design their products and know how to initially seed the network.

We model the above problem as a fixed-sum game between firms, where each firm tries to maximize discounted sum of its product consumption over time, considering its fixed budget. We describe the unique Nash equilibrium of the game between firms which depends on the budgets of the firms and the network structure. We show that at the Nash equilibrium, firms spend more budget on quality when their budgets are close. However, as the difference between budgets increases, firms spend more budget on seeding. We also show that given equal budget of firms, if seeding budget is nonzero for a balanced graph it will also be nonzero for any other graph, and if seeding budget is zero for a star graph it will be zero for any other graph too. Next, we study a different scenario in which firms produce products with some preset qualities. At some point in time, firms learn about the network structure and dedicate some budget to increase their product consumption. The budget can be spent on new seeding of agents in the social network and marginally improving the quality of the products. We derive a simple rule for optimal allocation of the budget between improving the quality and new seeding which in particular depends on the network structure and preset qualities of the products.
We show that the optimal allocation of the budget depends on the entire centrality distribution of the graph. Specially, we show that maximum seeding occurs in a graph with maximum number of agents with centralities above a certain threshold. Also, the difference in qualities of firms plays an important role in the optimal allocation of the budget. In particular, we show that as the gap between the qualities of the products widens, the firms allocate more budget to seeding. We see that the budgets in the first scenario and preset qualities of the second scenario play similar roles in the optimal allocation.

It is worthwhile to note that the problem of influence and spread in networks has been extensively studied in the past few years \cite{ballester2006s,bharathi2007competitive,galeotti2009influencing,kempe2003maximizing,kempe2005influential,chasparis2010control,vetta2002nash}. Also, diffusion of new behaviors and strategies through local coordination games has been an active field of research
\cite{ellison1993learning,kandori1993learning,harsanyi1988general,young1993evolution,young2001individual,young2002diffusion,montanari2010spread,kleinberg2007cascading}. Goyal and Kearns proposed a game theoretic model of product adoption in \cite{goyal2012competitive}. They computed upper bounds of the price of anarchy and showed how network structure may amplify the initial budget differences. Similarly, in \cite{bimpikiscompeting} Bimpikis, Ozdaglar and Yildiz proposed a game theoretic model of competition between firms which can target their marketing budgets to individuals embedded in a social network. They provided conditions under which it is optimal for the firms to asymmetrically target a subset of the individuals. Also, Chasparis and Shamma assumed a dynamical model of preferences in \cite{chasparis2010control} and computed optimal policies for finite and infinite horizon where endogenous network influences, competition between two firms and uncertainties in the network model were studied.
The main contribution of our work is to explicitly study the tradeoff between investing on quality of a product and initial seeding in a social network. Our model is similar to the model proposed in \cite{fazeli2012duopoly}, however, instead of pricing strategy in \cite{fazeli2012duopoly}, the notion of quality is introduced and the tradeoff between quality and seeding is studied. Also, our model is tractable and allows us to characterize the exact product consumption at each time, instead of lower and upper bounds provided in \cite{fazeli2012game,fazeli2012targeted}.



\section{The Spread Dynamics} \label{sec2}

There are $n$ agents $V=\{1,\ldots,n\}$ in a social network. The relationship among agents is represented by a directed graph $\mathcal{G}=(V,E)$ in which agents $i, j \in V$ are neighbors if $(i,j) \in E$. The weighted adjacency matrix of the graph $\mathcal{G}$ is denoted by a row stochastic matrix $G$ where the $ij$-th entry of $G$, denoted by $g_{ij}$, represents the strength of the influence of agent $j$ on $i$. For diagonal elements of matrix $G$, we have $g_{ii}=0$ for all agents $i \in V$. We assume that there are two competing firms $a$ and $b$ producing product $a$ and $b$. Each agent has  one unit demand which can be supplied by either of the firms.
We define the variable $0 \leq x_i(t) \leq 1$ and $0 \leq 1- x_i(t) \leq 1$ as the consumption of the product $a$ and $b$ by agent $i$ at time $t$.

Denote by $q_a, q_b \geq \epsilon > 0 $ the quality of product $a$ and $b$ respectively, where $\epsilon$ has an infinitesimal value. The values of $q_a$ and $q_b$ can be interpreted as the payoff that any two agents $i$ and $j$ would achieve if they both consume the same product. In other words, we can assume $q_a$ and $q_b$ are payoffs of the following game
\begin{center}
\begin{tabular}{|l|c|r|}
  \hline
   & $x_j$ & $1-x_j$ \\ \hline
  $x_i$ & $q_a x_i x_j$ & $0$ \\ \hline
  $1-x_i$ & $0$ & $q_b (1-x_i)(1-x_j)$ \\
  \hline
\end{tabular}
\end{center}
Since agents benefit from the same action of their neighbors, this game could be thought of as a local coordination game. From the above table it follows easily that the payoff of agents $i$ and $j$ from their interaction is
\begin{equation*}
u_{ij}(x_i,x_j)=q_a x_i x_j + q_b (1-x_i)(1-x_j).
\end{equation*}
We also assume that each agent benefits from taking action $x_i$ irrespective of actions taken by its neighbors. We assume the isolation payoff of consuming $x_i$ and $1-x_i$ from product $a$ and $b$ is represented by the following quadratic form functions
\begin{equation*}
u_{ii}^{a} = q_a ( \alpha x_i - \beta x^2_i), \qquad \qquad u_{ii}^{b} = q_b [\alpha (1-x_i) - \beta (1-x_i)^2],
\end{equation*}
where $\alpha$ and $\beta$ are parameters of the isolation payoff. This forms of payoff indicates that a product with higher quality has a higher isolation payoff. The total isolation payoff of agent $i$ can be written as
\begin{equation*}
u_{ii}(x_i) = \{ q_a (\alpha x_i - \beta x^2_i)\} + \{q_b \left(  \alpha (1-x_i) - \beta (1-x_i)^2  \right)  \}.
\end{equation*}
In order to have nonnegative isolation payoff for $x_i=0$ and $x_i=1$, we assume $\beta \leq \alpha$. Assuming quadratic form function for the isolation payoff not only makes the analysis more tractable, but also is a good second order approximation for the general class of concave payoff functions. By changing the variables $x_i = \frac{1}{2} + y_i$ after simplification we get
\begin{equation*}
\begin{split}
&u_{ij}(y_i,y_j)=q_a (\frac{1}{2} + y_i) (\frac{1}{2} + y_j) + q_b (\frac{1}{2} - y_i)(\frac{1}{2} - y_j), \\
&u_{ii}(y_i) = (q_a + q_b)(\frac{\alpha}{2} - \frac{\beta}{4} - \beta y^2_i) + (q_a - q_b) (\alpha - \beta) y_i.
\end{split}
\end{equation*}
Therefore, the total utility of agent $i$ from taking action $y_i$ is given by
\begin{equation} \label{total utility}
\begin{split}
&U_i(y_i, \vec{y}_{-i}) = (q_a + q_b)(\frac{\alpha}{2} - \frac{\beta}{4} - \beta y^2_i) + (q_a - q_b) (\alpha - \beta) y_i \\
& + q_a \sum_{j=1}^n g_{ij} (\frac{1}{2} + y_i) (\frac{1}{2} + y_j) + q_b \sum_{j=1}^n g_{ij} (\frac{1}{2} - y_i) (\frac{1}{2} - y_j).
\end{split}
\end{equation}
In the above equation $\vec{y}_{-i}$ denotes an action vector of all agents other than agent $i$. From equation \eqref{total utility} we can see that product $a$ and $b$ have a positive externality effect in the network, meaning that the usage level of an agent has a positive impact on the usage level of its neighbors. Therefore, it follows that $q_a$ and $q_b$ in addition to the payoff of a local coordination game, can be interpreted as coefficients of network externality of product $a$ and $b$ respectively.

We assume agents repeatedly apply myopic best response to the actions of their neighbors. This means that each agent, considering its neighbors consumptions at the current period, chooses the amount of the product that maximizes its current payoff, as its consumption for the next period. In other words, consumption of agent $i$ at time $t+1$ is updated as follows
\begin{equation*} \label{utility}
y_i(t+1) = \arg\max_{y_i}\quad U_i(y_i(t), \vec{y}_{-i}(t)).
\end{equation*}
The above equation results in the following update dynamics
\begin{equation*} \label{best response}
y_i(t+1)= (\frac{1}{2\beta})\sum_{j=1}^n g_{ij} y_j(t) + (\frac{q_a-q_b}{4\beta(q_a+q_b)})\sum_{j=1}^n g_{ij}
 + (\frac{(\alpha - \beta)(q_a - q_b)}{2\beta(q_a+q_b)}).
\end{equation*}
Therefore, the consumption of the product $a$ can be written as the following linear update dynamics form
\begin{equation} \label{update a}
\vec{y}(t+1)= (\frac{1}{2\beta}) G \vec{y}(t) + (\frac{(1+2(\alpha - \beta))(q_a - q_b) }{4\beta(q_a+q_b)})\vec{1}.
\end{equation}
Similarly, for the consumption of the product $b$ we have
$1 - x_i(t) = \frac{1}{2} - y_i(t)$.

\begin{asm} \label{asm1}
We assume $1 + \alpha \leq 2\beta$.
This assumption guaranties that $0 \leq x_i(t) \leq 1$ for all $i$ and all $t$ under the update rule \eqref{update a}.
\end{asm}
Using the above assumption \label{asm1} and defining
\begin{equation} \label{definition}
W \triangleq (\frac{1}{2\beta}) G, \qquad \qquad \vec{u}_a \triangleq \left(  ( \frac{1+2(\alpha - \beta)}{4\beta} ) (\frac{q_a - q_b}{q_a + q_b} )   \right)\vec{1},
\end{equation}
equations \eqref{update a} can be written as
\begin{equation*} \label{matrix form}
\vec{y}(t+1)= W \vec{y}(t) + \vec{u}_a.
\end{equation*}
The above equation can be expanded as
\begin{equation} \label{expanded form}
\vec{y}(t)= W^t \vec{y}(0) + \sum_{k=0}^{t-1} W^k \vec{u}_a.
\end{equation}
Therefore, the consumption of agents depends on the initial preferences, i.e. $\vec{y}(0)$, the quality of product $a$ and $b$, i.e. $q_a$ and $q_b$, and the structure of the network, i.e. the matrix $G$. In the next section we discuss how firms can exploit this information in order to maximize their product consumption and also characterize the unique Nash equilibrium of the game played between two firms.

\section{Optimal Budget Allocation} \label{sec3}


In this section we describe the game between firms where each firm aims to maximize the consumption of its product over an infinite time horizon given a fixed budget. Each firm has an initial budget that it can either invest on ``quality'' or spend it on promoting its product by seeding some of the agents, or both.
This initial seeding can be viewed as free offers to promote the products in the network.
We define the utility of each firm as the discounted sum of its product consumption over time
\begin{equation*}
\begin{split}
&U_a = \sum_{t=0}^{\infty} \delta^t \vec{1}^T ((0.5)\vec{1} + \vec{y}(t)), \\
&U_b = \sum_{t=0}^{\infty} \delta^t \vec{1}^T ((0.5)\vec{1} - \vec{y}(t)).
\end{split}
\end{equation*}
Each firm has a limited budget $K_a, K_b$ that can spend on either initial seeding, i.e. $\vec{S}_a$ and $\vec{S}_b$, or designing the quality of its product, i.e. $q_a$ and $q_b$, or both. Seeding $\vec{S}_a$ and $\vec{S}_b$ will change the initial consumption of products $a$ and $b$ by $\vec{S}_a-\vec{S}_b$ and $\vec{S}_b-\vec{S}_a$ respectively.
Therefore, the amount that agents initially consume from product $a$ and $b$ will be $\vec{x}(0) = (0.5)\vec{1}  + \vec{S}_a - \vec{S}_b$ and $\vec{1} - \vec{x}(0) = (0.5)\vec{1} + \vec{S}_b - \vec{S}_a$.
This means that if both firms seed an agent equally then the agent has no preference for one product over the other, i.e. $\vec{y}(0)=\vec{0}$. This assumption can be justified since agents should be initially indifferent between products before their consumption and realizing the quality of products if initial seedings by firms are equal. In order to have $0 \leq x_i(0) \leq 1$ and $0 \leq 1 - x_i(0) \leq 1$ for all agents $i$, we impose the constraints $\|\vec{S}_a\|_\infty\leq 0.5$ and $\|\vec{S}_b\|_\infty\leq 0.5$. This means that firms can initially seed agents up to their demand capacity which is $0.5$ for all agents. Using equations \eqref{definition} and \eqref{expanded form} and defining the centrality vector $v$ by $v= (I-\delta W^T)^{-1} \vec{1}$ where agents are ordered from the highest to the lowest centrality, i.e. $v_1 \geq v_2 \geq \cdots \geq v_n$,
and noting that $\sum v_i = \frac{2\beta n}{2\beta- \delta}$,
the utilities of firms can be written as
\begin{equation} \label{payoff}
\begin{split}
&U_a = (\frac{n}{2(1-\delta)}) + v^T \vec{S}_a - v^T \vec{S}_b  + \lambda (\frac{q_a-q_b}{q_a+q_b}), \\
&U_b = (\frac{n}{2(1-\delta)}) + v^T \vec{S}_b - v^T \vec{S}_a  + \lambda (\frac{q_b-q_a}{q_a+q_b}),
\end{split}
\end{equation}
where
\begin{equation} \label{lambda}
\lambda = \frac{\delta(1+2(\alpha-\beta)) n}{2(1-\delta)(2\beta- \delta)}.
\end{equation}
We assume the cost of each unit of quality is given by $c_q$ and the cost of each unit of initial seeding is given by $c_s$. Therefore, the game between the firms can be written as
\begin{equation*} \label{payoffs}
\begin{split}
& \max_{\vec{S}_a, q_a} \quad (\frac{n}{2(1-\delta)}) +  v^T \vec{S}_a - v^T \vec{S}_b + \lambda (\frac{q_a-q_b}{q_a+q_b}), \\
& \text{s.t.} \quad c_s \|\vec{S}_a\|_1 + c_q q_a=K_a,
\end{split}
\end{equation*}
for firm $a$, and
\begin{equation*} \label{payoffs}
\begin{split}
& \max_{\vec{S}_b, q_b} \quad (\frac{n}{2(1-\delta)}) + v^T \vec{S}_b - v^T \vec{S}_a + \lambda (\frac{q_b-q_a}{q_a+q_b}), \\
& \text{s.t.} \quad c_s \|\vec{S}_b\|_1 + c_q q_b=K_b,
\end{split}
\end{equation*}
for firm $b$.
Since the effect of the action of $\vec{S}_b$ is decoupled from $\vec{S}_a$ in $U_a$, therefore, the optimization problem of firm $a$ is equivalent to
\begin{equation*}
\begin{split}
& \max_{\vec{S}_a, q_a} \quad v^T \vec{S}_a + \lambda (\frac{q_a-q_b}{q_a+q_b}),\\
& \text{s.t.} \quad c_s \|\vec{S}_a\|_1 + c_q q_a=K_a.
\end{split}
\end{equation*}
Similarly, for firm $b$ we have
\begin{equation*}
\begin{split}
& \max_{\vec{S}_b, q_b} \quad v^T \vec{S}_b + \lambda (\frac{q_b-q_a}{q_a+q_b}),\\
& \text{s.t.} \quad c_s \|\vec{S}_b\|_1 + c_q q_b=K_b.
\end{split}
\end{equation*}
\begin{rem}
\label{actionspace}
It can be easily shown that for a seeding budget $\|S_a\|_1$, the optimal seeding strategy is to seed the agents in the order of their centralities (from highest to lowest) until we either ran out of budget or all the agents are seeded. Therefore, an optimal action $(\vec{S}_a, q_a)$ is fully determined from $(\|S_a\|_1, q_a)$, thus reducing the action space of firm $a$ to only $q_a$, given its budget constraint. Similar argument holds for firm $b$. Therefore, we may look at the utilities $U_a$ and $U_b$ as functions of $(q_a,q_b)$ under the optimal seeding and fixed budgets.
\end{rem}

In order to study the existence and uniqueness of the Nash equilibrium for the above game, we use a variation of the well-known Sion's minimax theorem (see \cite{sion1958general} for the original Sion's theorem) as below.
\begin{lem} \label{lem1}
Consider a two person zero-sum game, on closed, bounded, and convex finite-dimensional action sets $\Omega_1 \times \Omega_2$, defined by the continuous function $L(x_1, x_2)$. Let $L(x_1, x_2)$ be strictly convex in $x_1$ for each $x_2 \in \Omega_2$ and strictly concave in $x_2$ for each $x_1 \in \Omega_1$. Then the game admits a unique pure strategy Nash equilibrium.
\end{lem}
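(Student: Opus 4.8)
The plan is to identify the pure-strategy Nash equilibria of this zero-sum game with the saddle points of $L$, obtain existence by matching the hypotheses to the cited Sion minimax theorem, and then extract uniqueness from the \emph{strict} convexity/concavity assumptions, which plain Sion does not exploit. Fixing the convention that one player minimizes and the other maximizes $L$, a pair $(x_1^*, x_2^*)$ is a pure-strategy Nash equilibrium precisely when it is a saddle point, i.e.
\begin{equation*}
L(x_1^*, x_2) \leq L(x_1^*, x_2^*) \leq L(x_1, x_2^*) \quad \text{for all } x_1 \in \Omega_1,\ x_2 \in \Omega_2,
\end{equation*}
since neither player can then improve by a unilateral deviation.

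For existence, I would verify Sion's hypotheses. The sets $\Omega_1,\Omega_2$ are closed and bounded in a finite-dimensional space, hence compact by Heine--Borel, and convex; $L$ is continuous; and strict convexity in $x_1$ (resp. strict concavity in $x_2$) implies in particular quasi-convexity (resp. quasi-concavity) in that argument. Sion's theorem then yields
\begin{equation*}
\min_{x_1 \in \Omega_1} \max_{x_2 \in \Omega_2} L(x_1,x_2) = \max_{x_2 \in \Omega_2} \min_{x_1 \in \Omega_1} L(x_1,x_2) =: v,
\end{equation*}
with every inner and outer extremum attained by compactness and continuity. A standard construction then produces a saddle point attaining the common value $v$: take $x_1^*$ attaining the outer minimum on the left and $x_2^*$ attaining the outer maximum on the right, and check the saddle inequalities above directly from the definition of $v$.

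For uniqueness, I would first record the \emph{interchangeability} of saddle points. Suppose $(x_1^*,x_2^*)$ and $(\hat x_1,\hat x_2)$ are both saddle points, so each attains $v$. Using that $x_1^*$ minimizes $L(\cdot,x_2^*)$ gives $L(\hat x_1,x_2^*)\geq v$, while using that $\hat x_2$ maximizes $L(\hat x_1,\cdot)$ gives $L(\hat x_1,x_2^*)\leq v$; hence $L(\hat x_1,x_2^*)=v$, so $\hat x_1$ also minimizes $x_1\mapsto L(x_1,x_2^*)$ over $\Omega_1$. Since this map is strictly convex and $\Omega_1$ is convex, its minimizer is unique, forcing $x_1^*=\hat x_1$. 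The symmetric argument applied to the strictly concave map $x_2\mapsto L(x_1^*,x_2)$ gives $x_2^*=\hat x_2$. Therefore any two equilibria coincide.

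The existence half is essentially a citation once the hypotheses are aligned with Sion's theorem, so the delicate part is uniqueness, and within it the interchangeability step: the observation that any two saddle points share the common value $v$ and can be recombined coordinatewise into further saddle points. This is exactly what lets the strictness assumption ``bite'' --- it is used only at the final step, where a strictly convex (resp. concave) function is concluded to have a single minimizer (resp. maximizer) over a convex set. Without strictness, plain Sion guarantees existence but not uniqueness, so I expect this interchange-plus-strict-optimality argument to be the main obstacle worth spelling out carefully.
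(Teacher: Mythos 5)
Your proof is correct, but it differs from the paper in kind: the paper offers no argument at all for this lemma, simply citing Theorem A.4 of Alpcan and Ba\c{s}ar's book (\cite{alpcan2010network}), so your self-contained derivation is a genuine alternative to the paper's citation. Your route is the standard one that presumably underlies the cited textbook result: identify pure Nash equilibria of the zero-sum game with saddle points of $L$, invoke Sion (whose quasi-convexity/quasi-concavity and semicontinuity hypotheses are indeed implied by your strict convexity/concavity and continuity, with compactness from Heine--Borel) to get $\min\max = \max\min = v$ with all extrema attained, assemble a saddle point from outer optimizers, and then use interchangeability of saddle points so that strictness can force uniqueness of the minimizer of $x_1 \mapsto L(x_1, x_2^*)$ and of the maximizer of $x_2 \mapsto L(x_1^*, x_2)$. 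Every step checks out: the saddle-point construction from the minimax equality is valid, and your interchangeability chain (deriving $L(\hat{x}_1, x_2^*) = v$ from the two saddle inequalities, hence $\hat{x}_1$ is also a minimizer of a strictly convex function over a convex set) is exactly where strictness must bite, as you correctly emphasize --- the one small gap is that you assert rather than prove that all saddle points share the value $v$, though the standard four-inequality chain $L(x_1^*, x_2^*) \leq L(\hat{x}_1, x_2^*) \leq L(\hat{x}_1, \hat{x}_2) \leq L(x_1^*, \hat{x}_2) \leq L(x_1^*, x_2^*)$ closes this immediately and is implicit in your argument. What your approach buys is transparency and self-containedness (the reader sees precisely where each hypothesis, especially strictness, is used); what the paper's citation buys is brevity and reliance on a reference stated in exactly the game-theoretic form needed.
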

\begin{proof}
See Theorem $A.4$ on page $286$ in \cite{alpcan2010network}.
\end{proof}
In the following theorem we characterize the Nash equilibrium of the game played between firms.
\begin{thm}
Consider firms $a$ and $b$ with utility functions $U_a$ and $U_b$ as described in \eqref{payoff}.
The game between firms admits a unique Nash equilibrium of form
\begin{equation} \label{Nash eq}
\begin{split}
&q_a^*= (2 \lambda)(\frac{c_s}{c_q})(\frac{\tilde{v}_l}{(\tilde{v}_k+\tilde{v}_l)^2}), \qquad \qquad \qquad \qquad \qquad \qquad \quad q_b^*= (2 \lambda)(\frac{c_s}{c_q}) (\frac{\tilde{v}_k}{(\tilde{v}_k+\tilde{v}_l)^2}), \\
&S_{a_i}^{*} = \left\{
               \begin{array}{ll}
                 \frac{1}{2} & \hbox{$1 \leq i < k $}, \\
                 \frac{K_a}{c_s} - \frac{k-1}{2}-\frac{2 \lambda \tilde{v}_l}{(\tilde{v}_k+\tilde{v}_l)^2} & \hbox{$i =  k $}, \\
                 $0$ & \hbox{$i >  k $},
               \end{array}
             \right.  \qquad
S_{b_i}^{*} = \left\{
               \begin{array}{ll}
                 \frac{1}{2} & \hbox{$1 \leq i < l $}, \\
                 \frac{K_b}{c_s} - \frac{l-1}{2}-\frac{2 \lambda \tilde{v}_k}{(\tilde{v}_k+\tilde{v}_l)^2} & \hbox{$i =  l $}, \\
                 $0$ & \hbox{$i >  l $},
               \end{array}
             \right.
\end{split}
\end{equation}
for some $v_{k}  \leq  \tilde{v}_k  \leq  v_{k-1}$ and $v_{l}  \leq  \tilde{v}_l  \leq  v_{l-1}$ that satisfy
\begin{equation} \label{last agent}
\begin{split}
& 0 \leq S_{a_k}^{*} = \frac{K_a}{c_s} - \frac{k-1}{2}-\frac{2 \lambda \tilde{v}_l}{(\tilde{v}_k+\tilde{v}_l)^2} < \frac{1}{2}, \\
& 0 \leq S_{b_l}^{*} = \frac{K_b}{c_s} - \frac{l-1}{2}-\frac{2 \lambda \tilde{v}_k}{(\tilde{v}_k+\tilde{v}_l)^2} < \frac{1}{2},
\end{split}
\end{equation}
where $\tilde{v}_k=v_k$ if $S_{a_k}^{*}>0$ and $\tilde{v}_l=v_l$ if $S_{a_l}^{*}>0$.
\footnote{We define $v_{0} \triangleq \infty$. If $S_{a_n}^{*} = \frac{1}{2}$ or $S_{b_n}^{*} = \frac{1}{2}$, then $\tilde{v}_n  \leq  v_{n}$.}
\end{thm}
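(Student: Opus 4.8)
The plan is to reduce the two–firm game to a one–dimensional zero–sum game in the scalar pair $(q_a,q_b)$, invoke Lemma~\ref{lem1} for existence and uniqueness, and then read off the closed form from the first–order conditions. First I would observe that $U_a+U_b=\frac{n}{1-\delta}$ is constant, so the game is strategically zero–sum: firm $b$ maximizing $U_b$ is the same as minimizing $U_a$. By Remark~\ref{actionspace} the optimal seeding of each firm is the greedy top–centrality fill, so once the (binding) budget constraint $c_s\|\vec S_a\|_1+c_qq_a=K_a$ is imposed, firm $a$'s action collapses to the single variable $q_a$ ranging over a compact interval $[\epsilon,\,K_a/c_q]$, and likewise firm $b$'s action collapses to $q_b$. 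These reduced action sets are closed, bounded, convex and finite–dimensional, as Lemma~\ref{lem1} requires. I would also note here that the budget constraint is always active because $v\ge 0$ makes seeding beneficial and the quality term in \eqref{payoff} is monotone, which legitimizes eliminating $\|\vec S\|_1$ in favor of $q$.

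Next I would introduce the seeding–value function $f(B)=\max\{\,v^T\vec S:\ \|\vec S\|_1=B,\ 0\le \vec S\le \tfrac12\vec 1\,\}$. Since $v_1\ge\cdots\ge v_n\ge 0$, this greedy value is concave, nondecreasing and piecewise linear with breakpoints at $B=\tfrac m2$; its slope on $(\tfrac{m-1}{2},\tfrac m2)$ equals $v_m$, and its subdifferential at the breakpoint $\tfrac{m-1}{2}$ is the interval $[v_m,v_{m-1}]$. Substituting $v^T\vec S_a=f\!\big((K_a-c_qq_a)/c_s\big)$, the utility $U_a$ is the sum of this concave (in $q_a$) term and the strictly concave quality term $\lambda\frac{q_a-q_b}{q_a+q_b}$, hence strictly concave in $q_a$; symmetrically $-v^T\vec S_b=-f\!\big((K_b-c_qq_b)/c_s\big)$ is convex in $q_b$ while $\lambda\frac{q_a-q_b}{q_a+q_b}$ is strictly convex in $q_b$, so $U_a$ is strictly convex in $q_b$. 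Lemma~\ref{lem1} then delivers a unique pure–strategy Nash equilibrium, which accounts for the uniqueness claim without any further computation.

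To pin down the form \eqref{Nash eq}, I would write the stationarity (first–order) conditions. Let $k$ and $l$ denote the last agents seeded by $a$ and $b$. Balancing the marginal loss of seeding value $-\frac{c_q}{c_s}\tilde v_k$ against the marginal quality gain $\lambda\frac{2q_b}{(q_a+q_b)^2}$ gives $\lambda\frac{2q_b}{(q_a+q_b)^2}=\frac{c_q\tilde v_k}{c_s}$, and symmetrically $\lambda\frac{2q_a}{(q_a+q_b)^2}=\frac{c_q\tilde v_l}{c_s}$, where $\tilde v_k$ is the relevant subgradient of $f$ (equal to $v_k$ when $S_{a_k}^*>0$, and lying in $[v_k,v_{k-1}]$ when $S_{a_k}^*=0$, and similarly for $\tilde v_l$). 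Adding the two conditions yields $q_a+q_b=\frac{2\lambda c_s}{c_q(\tilde v_k+\tilde v_l)}$, and back–substitution gives exactly $q_a^*=2\lambda\frac{c_s}{c_q}\frac{\tilde v_l}{(\tilde v_k+\tilde v_l)^2}$ and $q_b^*=2\lambda\frac{c_s}{c_q}\frac{\tilde v_k}{(\tilde v_k+\tilde v_l)^2}$. Finally, feeding $q_a^*$ into the budget constraint produces the seeding profile $S_{a_i}^*$ and the boundary relations \eqref{last agent}, and likewise for firm $b$.

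The main obstacle I anticipate is the non–smoothness of $f$: the greedy–seeding value is only piecewise linear, so the ordinary derivative fails precisely at the budget levels where the last seeded agent is at capacity ($\tfrac12$) or at zero. I would resolve this with subdifferentials, which is exactly what introduces the auxiliary quantities $\tilde v_k\in[v_k,v_{k-1}]$, $\tilde v_l\in[v_l,v_{l-1}]$ together with the inequality conditions in \eqref{last agent}, rather than a single sharp value. Tied to this are the corner cases flagged in the footnote — full seeding $S_{a_n}^*=\tfrac12$ and the convention $v_0\triangleq\infty$ — which require separate bookkeeping but do not affect the balancing argument; I expect these to be the only genuinely fiddly part, with everything else following from the strict concave/convex structure and the two stationarity equations.
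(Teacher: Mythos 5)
Your proposal is correct and follows essentially the same route as the paper: reduce the game to the scalar pair $(q_a,q_b)$ via the greedy-seeding remark, verify the fixed-sum and strict concavity/convexity structure, invoke Lemma~\ref{lem1} for existence and uniqueness, and recover \eqref{Nash eq} from the first-order conditions with the boundary relations \eqref{last agent}. Your explicit seeding-value function $f$ with its subdifferential $[v_m,v_{m-1}]$ at breakpoints is a slightly more formal rendering of the paper's $\tilde{v}_k,\tilde{v}_l$ bookkeeping, but it is the same argument, not a different one.
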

\begin{proof}
Given the optimal seeding of each firm, i.e. seeding agents from the highest to the lowest centrality, as discussed in Remark~\ref{actionspace}, the tradeoff between seeding amount and quality can be solved by optimizing $U_a$ and $U_b$ with respect to $q_a$ and $q_b$ respectively. The action space of firms, i.e. $\epsilon\leq q_a\leq\frac{K_a}{c_q}$ and $\epsilon\leq q_b\leq\frac{K_b}{c_q}$, is a closed, bounded, and convex finite-dimensional set. Also, $U_a + U_b = \frac{n}{(1-\delta)}$, hence, the game is a fixed-sum game and can be transformed to a zero sum game by subtracting the constant value of $\frac{n}{2(1-\delta)}$ from $U_a$ and $U_b$.
The term $v^T \vec{S}_a$ in $U_a$ is piecewise linear in $\|S_a\|_1$ and thus in $q_a$, under optimal seeding. Using this, it is easy to see that $U_a(q_a, q_b)$ is strictly concave in $q_a$ for each $q_b$, and strictly convex in $q_b$ for each $q_a$ via a similar argument.
Therefore, based on Lemma \ref{lem1}, the game admits a unique Nash equilibrium. Assume that the first $(k-1)$ and $(l-1)$ agents are fully seeded by firms $a$ and $b$ respectively at equilibrium. Then, from the budget constraints we have $S_{a_k}^{*} = \frac{K_a}{c_s} - \frac{k-1}{2}-(\frac{c_q}{c_s})q_a$, and $S_{b_l}^{*} = \frac{K_b}{c_s} - \frac{l-1}{2}-(\frac{c_q}{c_s})q_b$, therefore, by plugging in the vector of optimal seeding $S^{*}_a$ and $S^{*}_b$ as described earlier,
the optimization problem of firms is given by
\begin{equation*}
\begin{split}
& \max_{\epsilon \leq q_a \leq \frac{K_a}{c_q}} \quad (\frac{1}{2})\sum_{i=1}^{k-1} v_i + (\frac{K_a}{c_s} - \frac{k-1}{2}-(\frac{c_q}{c_s})q_a)v_k + \lambda (\frac{q_a-q_b}{q_a+q_b}), \\
& \max_{\epsilon \leq q_b \leq \frac{K_b}{c_q}} \quad (\frac{1}{2})\sum_{i=1}^{l-1} v_i + (\frac{K_b}{c_s} - \frac{l-1}{2}-(\frac{c_q}{c_s})q_b)v_l + \lambda (\frac{q_b-q_a}{q_a+q_b}).
\end{split}
\end{equation*}
If $0 < S_{a_k}^{*} < \frac{1}{2}$ and $0 < S_{b_l}^{*} < \frac{1}{2}$, the first order optimality condition requires taking the derivative of the two equations above with respect to $q_a$ and $q_b$ and setting them to zero
\begin{equation*} \label{FOC}
\begin{split}
& -(\frac{c_q}{c_s})v_k  +  (\frac{2\lambda q_b}{(q_a+q_b)^2})  = 0, \\
& -(\frac{c_q}{c_s})v_l  +  (\frac{2\lambda q_a}{(q_a+q_b)^2})  = 0.
\end{split}
\end{equation*}
Solving equations above we get
\begin{equation*}
\begin{split}
& q_a^*= (2\lambda)(\frac{c_s}{c_q})(\frac{v_l}{(v_k+v_l)^2}), \\
& q_b^*= (2\lambda)(\frac{c_s}{c_q}) (\frac{v_k}{(v_k+v_l)^2}),
\end{split}
\end{equation*}
where
$1 \leq k, l \leq n$ are integers that must satisfy conditions in \eqref{last agent} for $\tilde{v}_k = v_k$ and $\tilde{v}_l = v_l$.
If $S_{a_k}^{*} = 0$ and $S_{b_l}^{*} = 0$, the first order optimality condition is as follows
\begin{equation} \label{FOC tilde}
v_{k}  \leq  \tilde{v}_k  \leq  v_{k-1}, \qquad \qquad v_{l} \leq  \tilde{v}_l  \leq  v_{l-1},
\end{equation}
where
\begin{equation} \label{tilde}
\tilde{v}_k = (2 \lambda)(\frac{c_s}{c_q}) (\frac{q_b}{(q_a+q_b)^2}), \qquad \qquad \tilde{v}_l = (2 \lambda)(\frac{c_s}{c_q}) (\frac{q_a}{(q_a+q_b)^2}),
\end{equation}
and if $S_{a_n}^{*} = \frac{1}{2}$ or $S_{b_n}^{*} = \frac{1}{2}$ then $\tilde{v}_n  \leq  v_{n}$.
We can solve $q_a^{*}$ and $q_b^{*}$ in terms of $\tilde{v}_k$ and $\tilde{v}_l$ as described in \eqref{Nash eq}.
\end{proof}

\begin{cor}
If firms have equal budgets $K_a=K_b=K$, then in the unique symmetric Nash equilibrium of the game between firms we have
\begin{equation} \label{Symmetric Nash}
q_a^* = q_b^* =(\frac{\lambda}{2})(\frac{c_s}{c_q})(\frac{1}{\tilde{v}_l}), \qquad \qquad
S_{a_i}^{*} = S_{b_i}^{*} = \left\{
               \begin{array}{ll}
                 \frac{1}{2} & \hbox{$1 \leq i < l $}, \\
                 \frac{K}{c_s} - \frac{l-1}{2}-\frac{\lambda} {2\tilde{v}_l} & \hbox{$i =  l $}, \\
                 $0$ & \hbox{$i >  l $},
               \end{array}
             \right.
\end{equation}
for some $v_{l}  \leq  \tilde{v}_l  \leq  v_{l-1}$ that satisfy
\begin{equation} \label{last agent symmetric nash}
0 \leq S_{a_l}^{*} = S_{b_l}^{*} = \frac{K}{c_s} - \frac{l-1}{2}-\frac{\lambda} {2\tilde{v}_l} < \frac{1}{2}, \\
\end{equation}
where $\tilde{v}_l=v_l$ if $S_{a_l}^{*} = S_{b_l}^{*} >0$.
\footnote{We define $v_{0} \triangleq \infty$. If $S_{a_n}^{*} = S_{b_n}^{*} = \frac{1}{2}$, then $\tilde{v}_n  \leq  v_{n}$.}
\end{cor}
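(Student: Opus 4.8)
The plan is to derive the corollary directly from the Theorem by exploiting the symmetry of the game when $K_a=K_b=K$, rather than re-solving the optimization from scratch. First I would observe that with equal budgets the game is invariant under interchanging the roles of firms $a$ and $b$: the centrality vector $v$ in \eqref{payoff} depends only on the graph $G$ and is therefore common to both firms, the two payoffs map to one another under the swap $(q_a,\vec{S}_a)\leftrightarrow(q_b,\vec{S}_b)$, and the two budget constraints $c_s\|\vec{S}_a\|_1+c_qq_a=K$ and $c_s\|\vec{S}_b\|_1+c_qq_b=K$ become identical. In particular, since the optimal-seeding reduction of Remark~\ref{actionspace} uses the same centrality ordering for both firms, the firm-swap is a genuine automorphism of the game. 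Hence if $(q_a^*,q_b^*,\vec{S}_a^*,\vec{S}_b^*)$ is a Nash equilibrium, then so is the profile obtained by swapping the two firms' actions.

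Next I would invoke the uniqueness of the Nash equilibrium already established in the Theorem via Lemma~\ref{lem1}. Since the swapped profile is also an equilibrium and the equilibrium is unique, the two must coincide. This forces $q_a^*=q_b^*$ and $\vec{S}_a^*=\vec{S}_b^*$ componentwise; in particular both firms fully seed the same leading agents, so the two indices satisfy $k=l$, and the effective centralities satisfy $\tilde{v}_k=\tilde{v}_l$.

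The remaining step is a substitution into the closed forms \eqref{Nash eq}. Setting $\tilde{v}_k=\tilde{v}_l$ and $k=l$ and simplifying, the common quality becomes
\[
q_a^*=q_b^*=(2\lambda)\Bigl(\frac{c_s}{c_q}\Bigr)\frac{\tilde{v}_l}{(2\tilde{v}_l)^2}=\Bigl(\frac{\lambda}{2}\Bigr)\Bigl(\frac{c_s}{c_q}\Bigr)\frac{1}{\tilde{v}_l},
\]
while the seed on the last partially seeded agent becomes $\frac{K}{c_s}-\frac{l-1}{2}-\frac{2\lambda\tilde{v}_l}{(2\tilde{v}_l)^2}=\frac{K}{c_s}-\frac{l-1}{2}-\frac{\lambda}{2\tilde{v}_l}$, which is exactly the form claimed in \eqref{Symmetric Nash}. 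The feasibility conditions \eqref{last agent} collapse to the single condition \eqref{last agent symmetric nash}, and the characterization of $\tilde{v}_l$ (equal to $v_l$ when the $l$-th seed is strictly positive, and determined through \eqref{tilde} otherwise) carries over verbatim.

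The main obstacle is the symmetry-plus-uniqueness argument of the first two paragraphs: one must check carefully that the firm-swap truly maps the game to itself after the Remark~\ref{actionspace} reduction, and must handle the non-generic corner cases (a vanishing seed at index $l$, or full seeding $S_{a_n}^*=\tfrac12$) so that $\tilde{v}_l$ is unambiguously defined through the inequalities \eqref{FOC tilde} and the expression \eqref{tilde} rather than simply by $\tilde{v}_l=v_l$. Once symmetry pins down $k=l$ and $\tilde{v}_k=\tilde{v}_l$, the algebraic simplification is routine.
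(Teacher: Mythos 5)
Your proposal is correct and matches the paper's (implicit) reasoning: the corollary is stated without proof as a direct specialization of the Theorem, and the intended route is exactly your symmetry-plus-uniqueness argument forcing $q_a^*=q_b^*$, $k=l$, $\tilde{v}_k=\tilde{v}_l$, followed by substitution into \eqref{Nash eq}. You simply make explicit the step the paper leaves tacit, and your algebra and handling of the corner cases ($S_{a_l}^*=0$, full seeding) are consistent with the Theorem's characterization.
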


Equation \eqref{Nash eq} indicates that the Nash equilibrium depends on both the budgets of the firms, i.e. $K_a$ and $K_b$, centrality distribution of agents in the network, i.e. $v$.
We will discuss the effect of each of these factors on the Nash equilibrium in the following subsections. All of our analysis here is for firm $a$ and similar results can be shown for firm $b$ as well. For simplicity,
we only discuss seeding budget; quality budget can be found easily using the budget constraint.

\subsection{Effect of Budget of Firms on Firms' Decisions:}
In this subsection we study how the budget of each firm, i.e. $K_a$ and $K_b$, can influence the Nash equilibrium.
As it can be seen from \eqref{Nash eq}, the Nash equilibrium depends on both $\tilde{v}_k$ and $\tilde{v}_l$, which in turn depend on firm's and its rival's budgets, i.e. both $K_a$ and $K_b$.
In the first proposition, we compare the seeding budget and quality of two firms at the Nash equilibrium with respect to their budgets. We first prove the following lemma.

\begin{lem} \label{lem3}
At the Nash equilibrium, if $q_a^{*} < q_b^{*}$, then $\|\vec{S}_a^{*}\|_1 \leq \|\vec{S}_b^{*}\|_1$.
\end{lem}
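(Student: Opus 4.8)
The plan is to reduce the quality comparison to a comparison of the marginal (effective) centralities $\tilde{v}_k$ and $\tilde{v}_l$, and then translate this into a statement about how many agents each firm seeds. From the equilibrium expressions in \eqref{Nash eq}, the common positive factor $(2\lambda)(c_s/c_q)/(\tilde{v}_k+\tilde{v}_l)^2$ cancels when $q_a^*$ and $q_b^*$ are compared, so that $q_a^* < q_b^*$ holds if and only if $\tilde{v}_l < \tilde{v}_k$. Reading off the seeding vectors in \eqref{Nash eq}, the seeding budgets are $\|\vec{S}_a^*\|_1 = \frac{k-1}{2} + S_{a_k}^*$ and $\|\vec{S}_b^*\|_1 = \frac{l-1}{2} + S_{b_l}^*$, with $0 \le S_{a_k}^*, S_{b_l}^* < \frac{1}{2}$ by \eqref{last agent}. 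Thus it suffices to control the integers $k$ and $l$, i.e.\ the number of agents each firm seeds.

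First I would rule out $k > l$. Using $v_k \le \tilde{v}_k \le v_{k-1}$ and $v_l \le \tilde{v}_l \le v_{l-1}$ together with the ordering $v_1 \ge \cdots \ge v_n$: if $k > l$ then $k-1 \ge l$, whence $v_{k-1} \le v_l$ and therefore $\tilde{v}_k \le v_{k-1} \le v_l \le \tilde{v}_l$. This contradicts $\tilde{v}_l < \tilde{v}_k$, so necessarily $k \le l$.

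Next I would treat the two remaining cases. If $k < l$ (so $k \le l-1$), then directly $\|\vec{S}_a^*\|_1 = \frac{k-1}{2} + S_{a_k}^* < \frac{k}{2} \le \frac{l-1}{2} \le \|\vec{S}_b^*\|_1$, using $S_{a_k}^* < \frac{1}{2}$. If $k = l$, the key observation is that $S_{a_k}^* > 0$ is impossible: by the convention in \eqref{Nash eq}, $S_{a_k}^* > 0$ forces $\tilde{v}_k = v_k$, while $\tilde{v}_l \ge v_l = v_k$ (as $l = k$), giving $\tilde{v}_l \ge \tilde{v}_k$, again contradicting $\tilde{v}_l < \tilde{v}_k$. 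Hence $S_{a_k}^* = 0$ and $\|\vec{S}_a^*\|_1 = \frac{k-1}{2} = \frac{l-1}{2} \le \frac{l-1}{2} + S_{b_l}^* = \|\vec{S}_b^*\|_1$. Combining the cases gives the claim.

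The main obstacle, and the only place that needs genuine care, is the $k = l$ case, where the crude counting bound degenerates and one must invoke the boundary definition of $\tilde{v}$ (that $\tilde{v}_k = v_k$ exactly when the marginal agent is strictly seeded) to break the tie. The degenerate endpoint situations (the convention $v_0 \triangleq \infty$ and the fully-seeded case $S_{a_n}^* = \frac{1}{2}$ noted in the footnote) should be checked separately, but they reduce to the same underlying monotonicity between the marginal centrality $\tilde{v}_k$ and the total seeding $\|\vec{S}_a^*\|_1$; in fact a short check shows the fully-seeded case cannot even co-occur with $q_a^* < q_b^*$, so it does not threaten the conclusion.
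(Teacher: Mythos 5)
Your argument is, in its main line, the same as the paper's: reduce $q_a^* < q_b^*$ to $\tilde{v}_l < \tilde{v}_k$ by cancelling the common factor in \eqref{Nash eq}, rule out $k>l$, handle $k<l$ by counting seeded agents, and break the $k=l$ tie using the convention that $S_{a_k}^*>0$ forces $\tilde{v}_k=v_k$. You are in fact more explicit than the paper in two spots (the paper asserts without detail that $\tilde{v}_l<\tilde{v}_k$ implies $k\leq l$, and calls the $k<l$ case ``obvious''), and those filled-in details are correct.

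The one genuine error is your final sentence: the fully-seeded case $S_{a_n}^*=\frac{1}{2}$ \emph{can} co-occur with $q_a^*<q_b^*$, so it cannot be dismissed as vacuous. If both budgets are large enough that both firms seed every agent to capacity, the equilibrium qualities are $q_a^*=(K_a-c_s\frac{n}{2})/c_q$ and $q_b^*=(K_b-c_s\frac{n}{2})/c_q$, so $K_a<K_b$ gives $q_a^*<q_b^*$; and for $q_a^*+q_b^*$ large the marginal values $(2\lambda)(\frac{c_s}{c_q})\frac{q_b^*}{(q_a^*+q_b^*)^2}$ and $(2\lambda)(\frac{c_s}{c_q})\frac{q_a^*}{(q_a^*+q_b^*)^2}$ both drop below $v_n\geq 1$, which is exactly the footnote's boundary condition $\tilde{v}_n\leq v_n$, so this is a legitimate equilibrium configuration. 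What is true---and is precisely how the paper disposes of this case---is that $S_{a_n}^*=\frac{1}{2}$ together with $\tilde{v}_l<\tilde{v}_k=\tilde{v}_n\leq v_n$ forces $S_{b_n}^*=\frac{1}{2}$ as well: otherwise firm $b$ would satisfy $\tilde{v}_l\geq v_l\geq v_n\geq\tilde{v}_k$, contradicting $\tilde{v}_l<\tilde{v}_k$. Then both seeding budgets equal $\frac{n}{2}$ and the lemma holds with equality. So the conclusion survives in this case, but your justification for it is false and should be replaced by this one-line argument; everything else in your proof stands.
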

\begin{proof}
If $q_a^{*} < q_b^{*}$, then from \eqref{Nash eq}, we have $\tilde{v}_l < \tilde{v}_k$. If $ \tilde{v}_{l} < \tilde{v}_k$ then either $k<l$ or $l=k$. If $k<l$ it is obvious to see that $\vec{S}^*_a\leq \vec{S}^*_b$. If $l=k$ then we have two cases: If $0 < S_{a_{k}}^{*} < \frac{1}{2}$, then based on \eqref{FOC tilde} we have $ \tilde{v}_k = v_k = v_{l} \leq \tilde{v}_{l}$ which is a contradiction with $\tilde{v}_{l} < \tilde{v}_k$. If $S_{a_{k}}^{*} = 0$, then obviously $S_{a_k}^{*} \leq S_{b_{l}}^{*}$ and therefore, $\vec{S}^*_a\leq \vec{S}^*_b$. If $S_{a_{n}}^{*} =\frac{1}{2}$, then $\tilde{v}_l < \tilde{v}_k = \tilde{v}_n \leq v_n$, hence, $S_{b_{n}}^{*} =\frac{1}{2}$. This finishes the proof.
\end{proof}
The next proposition states that the firm with higher budget surpasses the rival in both quality and seeding.
\begin{prop} \label{prop compare nash}
The firm with higher budget has higher seeding budget and quality, i.e. if $K_b\leq K_a$, then $\|\vec{S}_b^*\|_1\leq\|\vec{S}_a^*\|_1$ and $q^{*}_b\leq q^{*}_a$.
\end{prop}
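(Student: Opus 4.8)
The plan is to argue the quality comparison by contradiction and then bootstrap to the seeding comparison, in both cases leaning on Lemma~\ref{lem3} together with the two budget constraints $c_s\|\vec{S}_a^*\|_1 + c_q q_a^* = K_a$ and $c_s\|\vec{S}_b^*\|_1 + c_q q_b^* = K_b$, which tie the seeding amounts and the qualities together.

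First I would establish $q_b^* \leq q_a^*$. Suppose instead that $q_a^* < q_b^*$. Then Lemma~\ref{lem3} gives $\|\vec{S}_a^*\|_1 \leq \|\vec{S}_b^*\|_1$, and adding the strict quality gap $c_q q_a^* < c_q q_b^*$ to this weak seeding gap yields $K_a = c_s\|\vec{S}_a^*\|_1 + c_q q_a^* < c_s\|\vec{S}_b^*\|_1 + c_q q_b^* = K_b$, which contradicts the hypothesis $K_b \leq K_a$. Hence $q_b^* \leq q_a^*$, settling the quality half of the statement.

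For the seeding comparison I would split on whether the quality gap is strict. If $q_b^* = q_a^*$, the two budget constraints immediately give $c_s\|\vec{S}_b^*\|_1 = K_b - c_q q_b^* \leq K_a - c_q q_a^* = c_s\|\vec{S}_a^*\|_1$, so $\|\vec{S}_b^*\|_1 \leq \|\vec{S}_a^*\|_1$. If $q_b^* < q_a^*$, I would invoke the mirror image of Lemma~\ref{lem3}: since firms $a$ and $b$ play symmetric roles in the equilibrium characterization \eqref{Nash eq}, the very argument that proves Lemma~\ref{lem3} also proves that $q_b^* < q_a^*$ implies $\|\vec{S}_b^*\|_1 \leq \|\vec{S}_a^*\|_1$. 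In either case the desired seeding inequality follows.

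The step that needs care, and the main obstacle, is precisely the seeding comparison. Once $q_b^* \leq q_a^*$ is known, the budget identity alone cannot decide $\|\vec{S}_b^*\|_1$ versus $\|\vec{S}_a^*\|_1$, because the larger budget $K_a$ pushes firm $a$'s seeding up while its larger quality $q_a^*$ simultaneously consumes available seeding budget, so the two effects pull in opposite directions. The strict-quality case is exactly where this tension is broken only by appealing to the symmetric form of Lemma~\ref{lem3}, and I would be careful to peel off the tie case $q_b^* = q_a^*$ separately, since the lemma is stated with a strict quality inequality and does not directly cover equality.
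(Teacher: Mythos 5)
Your proof is correct and takes essentially the same route as the paper's: both halves reduce to Lemma~\ref{lem3} plus the budget constraints, the paper proving the seeding inequality by supposing $\|\vec{S}_a^*\|_1 < \|\vec{S}_b^*\|_1$ and implicitly invoking the $a\leftrightarrow b$ mirror of the lemma in contrapositive form, where you apply the mirror forward and settle the tie case $q_a^*=q_b^*$ by budget arithmetic. If anything, your write-up is slightly more careful than the paper's, which silently uses exactly the symmetric version of Lemma~\ref{lem3} (valid by the symmetry of the equilibrium characterization in \eqref{Nash eq}) that you state explicitly.
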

\begin{proof}
Suppose that $q^*_a < q^*_b$. From Lemma \ref{lem3} we have $\|\vec{S}^*_a\|_1 \leq \|\vec{S}^*_b\|_1$, which contradicts with $K_b \leq K_a$.
Also, suppose $\|\vec{S}_a^{*}\|_1 < \|\vec{S}_b^{*}\|_1$, then from Lemma \ref{lem3} we have $q^{*}_a \leq q^{*}_b$, which contradicts with $K_b \leq K_a$.
This completes the proof.
\end{proof}

In the next proposition we explain how the seeding budget and quality at the Nash equilibrium vary with $K_a$ and $K_b$.

\begin{prop} \label{prop quality nash}
Given a fixed graph, the optimal seeding $\|S_a^*\|_1$ and quality $q_a^*$ at the Nash equilibrium are increasing functions of $K_a$. Furthermore, $\|S_a^*\|_1$ is a decreasing function of $K_b$ if $K_b \leq K_a$ and an increasing function of $K_b$ if $K_a \leq K_b$.
\end{prop}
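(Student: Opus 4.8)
The plan is to reduce the equilibrium description in \eqref{Nash eq} to a two–equation system in the two marginal seeding values $m_a \triangleq \tilde{v}_k$ and $m_b \triangleq \tilde{v}_l$, and then run comparative statics on that system. By Remark~\ref{actionspace} the total seeding $\|S_a^*\|_1$ is a nonincreasing step function $\sigma(m_a)$ of the marginal value (seeding more agents means a lower marginal centrality), and from \eqref{Nash eq} the qualities are $q_a^* = (2\lambda)(c_s/c_q)\,m_b/(m_a+m_b)^2$ and $q_b^* = (2\lambda)(c_s/c_q)\,m_a/(m_a+m_b)^2$. Substituting into the two budget constraints and dividing by $c_s$ yields
\begin{equation*}
\Psi_a(m_a,m_b) \triangleq \sigma(m_a) + 2\lambda\frac{m_b}{(m_a+m_b)^2} = \frac{K_a}{c_s}, \qquad
\Psi_b(m_a,m_b) \triangleq \sigma(m_b) + 2\lambda\frac{m_a}{(m_a+m_b)^2} = \frac{K_b}{c_s}.
\end{equation*}
The comparative statics of $(m_a,m_b)$ in $(K_a,K_b)$ then translate directly into statements about $\|S_a^*\|_1=\sigma(m_a)$ and $q_a^*=(K_a-c_s\|S_a^*\|_1)/c_q$.

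Next I would record the sign structure of the Jacobian of $(\Psi_a,\Psi_b)$, with entries $A=\partial\Psi_a/\partial m_a$, $B=\partial\Psi_a/\partial m_b$, $C=\partial\Psi_b/\partial m_a$, $D=\partial\Psi_b/\partial m_b$. Since $\sigma$ is nonincreasing and the quality term $2\lambda m_b/(m_a+m_b)^2$ is strictly decreasing in $m_a$, both diagonal entries $A$ and $D$ are strictly negative. The off‑diagonal entries are $B=2\lambda(m_a-m_b)/(m_a+m_b)^3$ and $C=2\lambda(m_b-m_a)/(m_a+m_b)^3=-B$; this antisymmetry is exactly the fixed‑sum structure of the game. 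Consequently $\det J = AD + B^2 > 0$, the implicit function theorem applies, and by the uniqueness of the Nash equilibrium established above the resulting branch is the equilibrium.

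Inverting $J$ I would read off the three claims. Because $\det J>0$ and $D<0$, we get $\partial m_a/\partial K_a = D/(c_s\det J)<0$, and since $\sigma$ is nonincreasing this makes $\|S_a^*\|_1=\sigma(m_a)$ nondecreasing in $K_a$. For the quality I would use the budget identity together with the bound $0\le c_s\,\partial\|S_a^*\|_1/\partial K_a = \sigma'(m_a)D/\det J < 1$; the upper bound follows from $\det J - \sigma'(m_a)D = -\,[4\lambda m_b/(m_a+m_b)^3]\,D + B^2 > 0$, whence $\partial q_a^*/\partial K_a = (1 - c_s\,\partial\|S_a^*\|_1/\partial K_a)/c_q > 0$. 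Finally, $\partial\|S_a^*\|_1/\partial K_b = \sigma'(m_a)\cdot\partial m_a/\partial K_b = \sigma'(m_a)(-B)/(c_s\det J)$ has the sign of $B$, i.e.\ of $m_a-m_b$. Here \eqref{Nash eq} gives $q_a^*\ge q_b^*\iff m_b\ge m_a$, and Proposition~\ref{prop compare nash} gives $q_a^*\ge q_b^*\iff K_a\ge K_b$, so $\mathrm{sign}(m_a-m_b)=\mathrm{sign}(K_b-K_a)$; hence $\|S_a^*\|_1$ decreases in $K_b$ for $K_b\le K_a$ and increases for $K_a\le K_b$.

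The main obstacle is that $\sigma$ is only piecewise constant, so it is nondifferentiable at the centrality breakpoints and the literal implicit function theorem fails there. I would handle this by separating the two regime types making up the staircase: the corner regime, where firm $a$ seeds an integer number of agents ($S_{a_k}^*=0$, $\sigma'=0$, and a budget change is absorbed entirely by quality), and the interior regime, where $m_a$ is pinned at some $v_k$ ($0<S_{a_k}^*<\tfrac12$, and the budget change is absorbed by seeding). In each regime the signs above hold verbatim—in the interior case one fixes $m_a=v_k$ and differentiates $\Psi_b$ alone—and because the unique equilibrium is continuous in $(K_a,K_b)$ by Berge's maximum theorem, the consistent one‑sided monotonicity across regimes glues into the global statements. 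The delicate point is the sign‑changing coupling term $B$; it is precisely the use of Proposition~\ref{prop compare nash} to fix $\mathrm{sign}(m_a-m_b)$ that makes the $K_b$ comparative statics split exactly at $K_a=K_b$.
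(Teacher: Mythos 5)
Your proof is correct, and it reaches the result by a genuinely different engine than the paper's, while resting on the same skeleton. The paper's own proof is qualitative: it first establishes continuity of the unique equilibrium in $(K_a,K_b)$ via a closed-graph argument, then tracks regime by regime (whether $0<S_{a_k}^{*}<\tfrac12$ with $\tilde v_k=v_k$, or $S_{a_k}^{*}=0$ with $v_k\le\tilde v_k\le v_{k-1}$, or the saturated case $S_{a_n}^{*}=\tfrac12$) which of $\tilde v_k$, $q_a^{*}$, $S_{a_k}^{*}$ can move when a budget marginally increases, ruling out regime jumps by continuity; for the $K_b$ statement it combines the first part with Proposition~\ref{prop compare nash} and \eqref{tilde}, exactly as you do. You instead condense the equilibrium conditions into the system $\Psi_a=K_a/c_s$, $\Psi_b=K_b/c_s$ in the marginal centralities and run Cramer's-rule comparative statics, with the fixed-sum structure surfacing as the antisymmetry $C=-B$ and hence $\det J=AD+B^2>0$. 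This buys things the paper's case-tracking does not: explicit derivative formulas, the uniform bound $0\le c_s\,\partial\|S_a^{*}\|_1/\partial K_a<1$ that delivers monotonicity of $q_a^{*}$ in one stroke, and the identification of the cross effect's sign with $\mathrm{sign}(m_a-m_b)$, which makes transparent why the split occurs exactly at $K_a=K_b$. The price is that the smooth computation is literally valid nowhere on the equilibrium path: the pair $(\|S_a^{*}\|_1,\tilde v_k)$ traces a staircase correspondence, not a differentiable function, with horizontal pieces ($\sigma'=0$, corner regime) and vertical pieces ($m_a$ pinned at $v_k$, interior regime) — you correctly acknowledge this and repair it by verifying the signs in each regime (including the mixed regimes for the $K_b$ statement) and gluing by continuity, which is in substance the paper's decomposition with the same two external ingredients (continuity of the unique equilibrium, here via Berge rather than the paper's closed-graph argument, and Proposition~\ref{prop compare nash}). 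Two small slips, neither fatal: Proposition~\ref{prop compare nash} literally gives only the implication $K_b\le K_a\Rightarrow q_b^{*}\le q_a^{*}$ (the converse direction you need follows by swapping the firms' roles, and only the weak inequalities are required for your sign argument); and since the proposition's ``increasing'' means nondecreasing — the paper's own proof has cases where $q_a^{*}$ ``does not change'' — your interior-regime conclusion that $q_a^{*}$ is locally constant in $K_a$ is consistent with the claim, but you should state the monotonicity as weak throughout.
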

\begin{proof}
First note that $\|S_a^*\|_1$, $\|S_b^*\|_1$, $q_a^*$, $q_b^*$ (and as a result $\tilde{v}_k$ and $\tilde{v}_l$) are continuous functions of $K_a$ and $K_b$. To see this, let $B(q_a,q_b,K_a,K_b)$ denote the best response of the firms to qualities $(q_a,q_b)$ when the budgets are $(K_a,K_b)$. It follows from the continuity of the best response and compactness of action spaces that the set $\{(q_a^*,q_b^*)|B(q_a^*,q_b^*,K_a,K_b)=(q_a^*,q_b^*)\}$, that is the equilibrium space, is closed. This implies that the graphs of the functions $q_a^*(K_a,K_b)$ and $q_b^*(K_a,K_b)$ are closed and thus are continuous.

Now, if $0 < S_{a_k}^{*} < \frac{1}{2}$, then $\tilde{v}_k = v_k$. If $K_a$ marginally increases, then, using the continuity of the equilibrium, the level $k$ and as a result $\tilde{v}_k$ does not change. Thus, given fixed $K_b$, the constraint for firm $b$ in \eqref{last agent} and hence
$\tilde{v}_l$ does not change. Therefore, if $K_a$ marginally increases, from the Nash equilibrium in \eqref{Nash eq}, $q^{*}_a$ does not change and $S_{a_k}^{*}$ marginally increases.
If $S_{a_k}^{*} = 0$ and $v_k < \tilde{v}_k < v_{k-1}$, and $K_a$ marginally increases, from the continuity of Nash we still have $v_k < \tilde{v}_k' < v_{k-1}$ and as a result $S_{a_k}^{*} = 0$ does not change and hence, $q^{*}_a$ marginally increases. If $S_{a_k}^{*} = 0$ and $\tilde{v}_k=v_k$ or $\tilde{v}_k = v_{k-1}$, and $K_a$ marginally increases, either we have $v_k < \tilde{v}_k' < v_{k-1}$ which means $S_{a_k}^{*} = 0$ does not change and $q^{*}_a$ marginally increases, or $\tilde{v}_k$ does not change. In this latter case, given the fixed budget $K_b$,
the constraint for firm $b$ in \eqref{last agent} will remain unchanged and hence $\tilde{v}_l$ will not change.
Therefore, from the Nash equilibrium in \eqref{Nash eq}, $q^{*}_a$ does not change and as a result $S_{a_k}^{*}$ marginally increases.
It is to be noted here that the cases where $\tilde{v}_k$ moves above $v_{k-1}$ or below $v_{k}$ are not feasible as they will cause a jump in the seeding budget, contradicting the continuity of equilibrium. The analysis for the case when $S_{a_n}^{*} = \frac{1}{2}$ and $\tilde{v}_n \leq v_n$ is quite similar.
Therefore, $\frac{\partial \|\vec{S}_a^{*}\|_1}{\partial K_a}\geq 0$ and $\frac{\partial q_a^{*}}{\partial K_a}\geq 0$.

For the second part of the proposition, if $S_{a_k}^{*}=0$ and $v_k < \tilde{v}_k < v_{k-1}$ and $K_b$ marginally increases, from continuity of $\tilde{v}_k$ we still have $v_k < \tilde{v}_k' < v_{k-1}$, and therefore, $S_{a_k}^{*}=0$ and given the fixed $K_a$, $q_{a}^{*}$ does not change. Hence, we only need to consider the case where $0 < S_{a_k}^{*} < \frac{1}{2}$ and $\tilde{v}_k = v_k$, or $S_{a_k}^{*}=0$ and $\tilde{v}_k = v_k$  or $\tilde{v}_k = v_{k-1}$. In these cases, it is easy to see that either $\tilde{v}_k$ or $S_{a_k}^{*}$ remains unchanged. In the latter case, (given the fixed $K_a$) $q_a^{*}$ does not change. Similar argument holds for when $S_{a_n}^{*} = \frac{1}{2}$ and $\tilde{v}_n \leq v_n$. Therefore, we only need to consider the case where $\tilde{v}_k$ does not change.
From the first part of the proposition, $q_b^{*}$ is an increasing function of $K_b$. Also, from Proposition \ref{prop compare nash}, if $K_b \leq K_a$ ($K_a \leq K_b$), then $q_b^{*} \leq q_a^{*}$ ($q_a^{*} \leq q_b^{*}$). Therefore, if $K_b \leq K_a$ ($K_a \leq K_b$) and $K_b$ marginally increases, equations \eqref{tilde} implies that $q_a^{*}$ must marginally increase (decrease) or does not change so that $\tilde{v}_k$ remains fixed. Hence, given constant $K_a$, $S_{a_k}^{*}$ marginally decreases (increases) or does not change.
Therefore, $\frac{\partial \|\vec{S}_a^{*}\|_1}{\partial K_b} \leq 0,$ for $K_b \leq K_a$ and $\frac{\partial \|\vec{S}_a^{*}\|_1}{\partial K_b} \geq 0,$ for $K_a \leq K_b$.

\end{proof}

Proposition \ref{prop quality nash} implies that when $K_b \leq K_a$, the higher the budget of the rival firm, the lower the seeding budget of firm $a$, i.e., if $K_b \leq K_b'\leq K_a$ then, $\|\vec{S}_a^*(K_b')\|_1\leq\|\vec{S}_a^*(K_b)\|_1$. On the other hand, when competing with a firm which has a higher budget, i.e. $K_a \leq K_b$, the higher the budget of the rival firm, the higher firm $a$ should spend on seeding. In other words, if $K_a\leq K_b\leq K_b'$ then, $\|\vec{S}_a^*(K_b)\|_1\leq\|\vec{S}_a^*(K_b')\|_1$.

Combining these two results, we can see that given a fixed value of $K_a$, the seeding budget of firm $a$ is increasing with the difference $|K_a-K_b|$. The seeding budget attains its minimum when $K_b=K_a$, implying that the firm should allocate more budget to quality to distance itself from the rival firm. However, as the gap between budget widens, competition in qualities becomes less effective and firms spend more budget on seeding.

\subsection{Effect of Network Structure on Firms' Decisions}



In this subsection we study the effect of network structure on the Nash equilibrium. Since we already studied the effect of the budget on the Nash equilibrium, for the rest of this subsection we assume $K_a = K_b = K$ so that we can observe only the effect of the network structure. We first focus on two well studied graphs, i.e. star and balanced graphs, and highlight how they can reflect important properties of the seeding budget. Before continuing further, we first formally define these two graphs and find their network centralities in the next lemma.

\begin{mydef}
A star graph is a directed graph in which there is an edge from any noncentral agent $i \in V - \{ j \}$ to the central agent $j$ with the weight $g_{ij}=1$ and there are edges from the central agent $j$ to all noncentral agents $i \in V - \{ j \}$ such that $\sum_{i} g_{ji} = 1$.
\end{mydef}

\begin{mydef}
A balanced graph is a directed graph in which the in-degree of each agent is equal to its out-degree, i.e. $\sum_j g_{ji} = \sum_j g_{ij} = 1$.
\end{mydef}

\begin{lem}\label{v_Lv_H}
The centrality of the agents in a balanced graph is given by $\bar{v}=\frac{2\beta}{2\beta-\delta}$. In a star graph, the centrality of the central agent is
\begin{equation*}
v_h^{s}=\frac{1+\frac{\delta(n-1)}{2\beta}}{1-(\frac{\delta}{2\beta})^2},
\end{equation*}
and the centrality of non central agents is
\begin{equation*}
v_l^{s}=\frac{1+\frac{\delta}{2\beta(n-1)}}{1-(\frac{\delta}{2\beta})^2}.
\end{equation*}
Moreover, for any arbitrary graph $G$, $\bar{v}\leq v_1 \leq v_h^{s}$.
\end{lem}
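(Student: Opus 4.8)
The plan rests on rewriting the centrality in fixed-point form. With $\rho\triangleq\frac{\delta}{2\beta}$, the definition $v=(I-\delta W^{T})^{-1}\vec 1$ together with $W=\frac{1}{2\beta}G$ is equivalent to the linear system $v=\vec 1+\rho G^{T}v$, that is, $v_i=1+\rho\sum_{j}g_{ji}v_j$ for every $i$. Since $G$ is row stochastic its spectral radius is $1$, so $\rho G^{T}$ has spectral radius $\rho<1$ and this system has a unique solution. I would also use two facts that are immediate from the Neumann expansion $v=\sum_{k\ge 0}\rho^{k}(G^{T})^{k}\vec 1$: every entry $v_i$ is nonnegative, and (as already recorded in the text) $\sum_i v_i=\frac{2\beta n}{2\beta-\delta}=\frac{n}{1-\rho}=n\bar v$.

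For the balanced graph, the hypothesis $\sum_j g_{ji}=1$ says exactly that $G^{T}$ is itself row stochastic, so $G^{T}\vec 1=\vec 1$. I would then simply verify that the constant vector $\bar v\,\vec 1$ with $\bar v=\frac{1}{1-\rho}=\frac{2\beta}{2\beta-\delta}$ satisfies $v=\vec 1+\rho G^{T}v$; uniqueness of the solution then finishes this case. For the star graph, symmetry forces all noncentral agents to share one centrality $v_l$, so the system collapses to two scalar equations: for the center, $v_h=1+\rho(n-1)v_l$ (each of the $n-1$ spokes points to it with weight $1$), and for a spoke, $v_l=1+\rho\frac{1}{n-1}v_h$ (its only in-edge is the center's out-weight $\frac{1}{n-1}$). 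Solving this $2\times 2$ system by substitution yields $v_h^{s}$ and $v_l^{s}$; this is routine.

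The substantive claim is the two-sided bound $\bar v\le v_1\le v_h^{s}$ for an arbitrary $G$. The lower bound is immediate, since $v_1=\max_i v_i\ge\frac{1}{n}\sum_i v_i=\bar v$. For the upper bound I would apply the fixed-point equation to the top agent and exploit that $G$ is row stochastic with zero diagonal, so $g_{j1}\le 1$, together with $v_j\ge 0$:
\begin{equation*}
v_1=1+\rho\sum_{j\neq 1}g_{j1}v_j\;\le\;1+\rho\sum_{j\neq 1}v_j=1+\rho\Big(\tfrac{n}{1-\rho}-v_1\Big).
\end{equation*}
Rearranging to $v_1(1+\rho)\le 1+\frac{\rho n}{1-\rho}$ gives $v_1\le\frac{1+\rho(n-1)}{1-\rho^{2}}=v_h^{s}$.

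The only real obstacle is spotting the right relaxation in that last display: bounding $\sum_{j\neq1}g_{j1}v_j$ by $\sum_{j\neq1}v_j$ (using $g_{j1}\le 1$) and then eliminating the remaining sum through the \emph{global} identity $\sum_j v_j=\frac{n}{1-\rho}$ rather than through any per-agent estimate, which would be too lossy because in-degrees are unbounded. As a sanity check, equality in the two bounds pins down the extremal configurations exactly: the lower bound is tight iff all $v_i$ coincide (the balanced graph), and the upper bound is tight iff $g_{j1}=1$ for every $j\neq1$, i.e. every agent devotes all of its out-weight to agent $1$ (the star). This matches the two graphs singled out in the statement.
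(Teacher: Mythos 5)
Your proposal is correct, and its overall architecture matches the paper's: the same fixed-point form $v=\vec 1+\rho G^{T}v$ with $\rho=\frac{\delta}{2\beta}$, the same verification of the constant solution for the balanced graph, the same two-equation reduction for the star, and the identical averaging argument $v_1\geq\frac{1}{n}\sum_i v_i=\bar v$ for the lower bound. The one genuine divergence is the upper bound $v_1\leq v_h^{s}$, where you and the paper take mirror-image routes. You apply the fixed-point equation to agent $1$ itself and relax its \emph{incoming} weights, using only $0\leq g_{j1}\leq 1$ and $v_j\geq 0$ to get $v_1\leq 1+\rho\sum_{j\neq 1}v_j$. The paper instead applies the equation to every \emph{other} agent $j\neq 1$, keeping only the contribution from agent $1$, namely $v_j\geq 1+\rho\, g_{1j}v_1$, and sums these using the row-stochasticity of agent $1$'s row, $\sum_{j\neq 1}g_{1j}=1$, to obtain $\sum_j v_j\geq(n-1)+(1+\rho)v_1$. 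Both relaxations are then closed by the same global identity $\sum_i v_i=\frac{n}{1-\rho}$, and in fact both collapse to the identical inequality $(1+\rho)v_1\leq\frac{1+(n-1)\rho}{1-\rho}$, hence $v_1\leq v_h^{s}$. Your version is slightly more elementary in its hypotheses (it needs only that entries of $G$ lie in $[0,1]$ and $v\geq 0$, not the normalization of agent $1$'s out-weights), while the paper's dual estimate makes visible how agent $1$'s out-going influence props up the remaining centralities; both are tight exactly at the star. Your closing equality analysis — the lower bound tight iff $v$ is constant, which via the fixed-point equation forces unit column sums (the balanced condition), and the upper bound tight iff $g_{j1}=1$ for all $j\neq 1$ (the star) — is a correct refinement that the paper does not state, and it is sound since $v_j\geq 1>0$ by the Neumann series.
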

\begin{proof}
First part simply follows from the fact that $v=(I-\delta W^T)^{-1}\vec{1}$, where $W$ is given by \eqref{definition}, and that for any agent $i$ in a balanced graph $\sum g_{ji}=\sum g_{ij}=1$. For the star graph, noting that $v=\vec{1}+\delta W^T v$, we can obtain
\begin{align*}
v_h^{s}&=1+ \frac{\delta(n-1)v_l^{s}}{2\beta},\\
v_l^{s}&=1+ \frac{\delta v_h^{s}}{2\beta(n-1)},
\end{align*}
solving which we can find $v_h^{s}$ and $v_l^{s}$ as given in the lemma.

Also, for any arbitrary graph $G$, $v_1 \geq\frac{\sum v_i}{n}=\bar{v}$. To show $v_1 \leq v_h^{s}$,
using $v=\vec{1}+\delta W^T v$ for all $j\neq 1$ we can obtain
\begin{equation*}
v_j\geq 1+ (\frac{\delta}{2\beta}) g_{1j} v_1.
\end{equation*}
This yields
\begin{equation*}
\sum_{j=1}^{n}{v_j}\geq (n-1)+(1+\frac{\delta}{2\beta})v_1.
\end{equation*}
Applying simple algebra along with the fact that $\sum v_j=\frac{2\beta n}{2\beta-\delta}$ leads to $v_1\leq v_h^{s}$.
\end{proof}

The next proposition provides a condition for seeding profitability of any general graph. Also, the seeding budget of star and balanced graphs are compared and it is shown that the graph with higher seeding budget can be any of the two, depending on the budget.

\begin{prop} \label{compare seeding nash}
If seeding budget is nonzero for a balanced graph, it will be nonzero for any other graph too. On the other hand, if seeding budget is zero for a star graph, it will also be zero for any other graph.
Moreover, if $\frac{1}{2} + \frac{\lambda}{2 \bar{v}} < \frac{K}{c_s} < \frac{n}{2} + \frac{\lambda}{2 v_l^s}$, a balanced graph has a larger seeding budget than a star graph, and if $\frac{\lambda}{2 v_h^s} < \frac{K}{c_s} < \frac{1}{2} + \frac{\lambda}{2 \bar{v}}$, a star graph has a larger seeding budget than a balanced graph. For  $\frac{n}{2} + \frac{\lambda}{2 v_l^s} < \frac{K}{c_s} < \frac{n}{2} + \frac{\lambda}{2}$ they have the same seeding budget.
\end{prop}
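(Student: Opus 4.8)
The plan is to collapse the entire statement onto the scalar $t:=K/c_s$ and the scalar equilibrium seeding budget $\sigma(t):=\|\vec{S}_a^*\|_1$. By the Corollary and the budget constraint $c_s\|\vec S_a\|_1+c_qq_a=K$ together with $q_a^*=\tfrac{\lambda}{2}\tfrac{c_s}{c_q}\tfrac{1}{\tilde v_l}$ from \eqref{Symmetric Nash}, the $\tfrac{l-1}{2}$ contributed by the fully seeded agents cancels and the marginal-agent regime gives the clean expression $\sigma(t)=t-\tfrac{\lambda}{2\tilde v_l}$. The first step is to pin down exactly when $\sigma(t)=0$: seeding vanishes precisely when even the single most central agent is left unseeded, i.e. $l=1$ with $S_{a_1}^*=0$; by \eqref{last agent symmetric nash} together with the convention $v_0\triangleq\infty$, this happens iff there is an admissible $\tilde v_1\ge v_1$ with $t=\tfrac{\lambda}{2\tilde v_1}$, equivalently iff $t\le\tfrac{\lambda}{2v_1}$. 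Hence for \emph{any} graph, seeding is nonzero iff $t>\tfrac{\lambda}{2v_1}$, where $v_1$ is the largest centrality.

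With this characterization the two qualitative claims are immediate from Lemma~\ref{v_Lv_H}. For a balanced graph $v_1=\bar v$, so ``seeding nonzero for the balanced graph'' reads $t>\tfrac{\lambda}{2\bar v}$; since $\bar v\le v_1$ for every graph, $t>\tfrac{\lambda}{2\bar v}\ge\tfrac{\lambda}{2v_1}$, so seeding is nonzero on every graph. Dually, for the star $v_1=v_h^s$, so ``seeding zero for the star'' reads $t\le\tfrac{\lambda}{2v_h^s}$; since $v_1\le v_h^s$ always, $t\le\tfrac{\lambda}{2v_h^s}\le\tfrac{\lambda}{2v_1}$, forcing seeding to be zero on every graph.

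For the comparison I would write $\sigma(t)$ explicitly as a continuous, nondecreasing, piecewise-linear profile for each graph, reading off breakpoints from where each agent saturates at $\tfrac12$. For the balanced graph every centrality equals $\bar v$, so $v^T\vec S_a=\bar v\|\vec S_a\|_1$ is genuinely linear and $\sigma_{\mathrm{bal}}(t)=\min\{\max\{t-\tfrac{\lambda}{2\bar v},0\},\tfrac n2\}$, saturating at $\tfrac n2$ once $t\ge\tfrac n2+\tfrac{\lambda}{2\bar v}$. For the star the center (centrality $v_h^s$) is seeded first, giving $\sigma_{\mathrm{star}}(t)=t-\tfrac{\lambda}{2v_h^s}$ on $(\tfrac{\lambda}{2v_h^s},\tfrac12+\tfrac{\lambda}{2v_h^s})$; then a \emph{plateau} $\sigma_{\mathrm{star}}=\tfrac12$ on $[\tfrac12+\tfrac{\lambda}{2v_h^s},\tfrac12+\tfrac{\lambda}{2v_l^s}]$, where the center is full, no periphery is seeded, and the extra budget flows into quality (here $\tilde v_l$ slides down from $v_h^s$ to $v_l^s$, exactly the range allowed by \eqref{FOC tilde}); then peripheral seeding $\sigma_{\mathrm{star}}=t-\tfrac{\lambda}{2v_l^s}$ up to saturation at $t=\tfrac n2+\tfrac{\lambda}{2v_l^s}$. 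The single extra ingredient needed is the ordering $v_l^s<\bar v<v_h^s$, which follows since the average centrality equals $\bar v$ on every graph ($\sum v_i=n\bar v$) while $v_h^s>\bar v$ forces $v_l^s<\bar v$.

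The three regimes then follow from a direct piecewise comparison, using $\tfrac{\lambda}{2v_h^s}<\tfrac{\lambda}{2\bar v}<\tfrac{\lambda}{2v_l^s}$ to interleave all breakpoints. On $(\tfrac{\lambda}{2v_h^s},\tfrac12+\tfrac{\lambda}{2\bar v})$ one checks star $>$ balanced: either $\sigma_{\mathrm{bal}}=0<\sigma_{\mathrm{star}}$, or both seed and the slope comparison gives $\sigma_{\mathrm{star}}-\sigma_{\mathrm{bal}}=\tfrac{\lambda}{2}(\tfrac1{\bar v}-\tfrac1{v_h^s})>0$, or the star is on its plateau at $\tfrac12$ while $\sigma_{\mathrm{bal}}=t-\tfrac{\lambda}{2\bar v}<\tfrac12$. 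On $(\tfrac12+\tfrac{\lambda}{2\bar v},\tfrac n2+\tfrac{\lambda}{2v_l^s})$ the inequality reverses: where both seed one gets $\sigma_{\mathrm{bal}}-\sigma_{\mathrm{star}}=\tfrac{\lambda}{2}(\tfrac1{v_l^s}-\tfrac1{\bar v})>0$, otherwise $\sigma_{\mathrm{bal}}$ has already saturated at $\tfrac n2$ while the star has not, or the star still sits on its plateau at $\tfrac12<\sigma_{\mathrm{bal}}$. Finally, for $t>\tfrac n2+\tfrac{\lambda}{2v_l^s}$ both graphs are fully seeded and $\sigma_{\mathrm{bal}}=\sigma_{\mathrm{star}}=\tfrac n2$. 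I expect the only genuine obstacle to be the bookkeeping around the star's plateau and verifying continuity at the breakpoints; once both piecewise-linear profiles are laid down, every inequality reduces to the single ordering $v_l^s<\bar v<v_h^s$.
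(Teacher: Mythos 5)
Your proposal is correct and follows essentially the same route as the paper: both rest on the formula $\|\vec{S}^*\|_1 = \frac{K}{c_s}-\frac{\lambda}{2\tilde{v}_l}$ from the symmetric equilibrium, the bounds $\bar{v}\leq v_1\leq v_h^{s}$ of Lemma~\ref{v_Lv_H}, the ordering $v_l^{s}<\bar{v}<v_h^{s}$, and a case analysis over the same breakpoints ($\frac{\lambda}{2v_h^s}$, $\frac{\lambda}{2\bar v}$, $\frac12+\frac{\lambda}{2v_h^s}$, $\frac12+\frac{\lambda}{2v_l^s}$, and the saturation points). Your explicit piecewise-linear profiles $\sigma_{\mathrm{bal}}(t)$ and $\sigma_{\mathrm{star}}(t)$ are just a systematized packaging of the paper's casework (the star's plateau at $\frac12$ is exactly the paper's case $S_{a_2}=0$), with the minor bonus that you actually justify $v_l^{s}<\bar{v}$, which the paper asserts without proof.
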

\begin{proof}
If seeding budget is nonzero for a balanced graph, then according to \eqref{last agent symmetric nash} we have $\frac{\lambda} {2\bar{v}} < \frac{K}{c_s}$. As a result, for any other graph we will have $\frac{\lambda} {2v_1} < \frac{K}{c_s}$, since according to Lemma~\ref{v_Lv_H} $\bar{v}\leq v_1$.
This means that there exists at least one agent that must be seeded. On the other hand, if seeding budget is zero for a star graph, then we must have $\frac{K}{c_s} \leq \frac{\lambda} {2 v_h^s}$. Since we know $v_h^s \geq v_i$ for any agent $i$ of any arbitrary graph, therefore, $\frac{K}{c_s}\leq \frac{\lambda} {2 v_i}$ and no agent can be seeded in any other graph.

For the second part of the proposition, denote quality and seeding budget of balanced and star graphs by
$q_r$, $\|\vec{S}_r\|_1$ and $q_s$, $\|\vec{S}_s\|_1$ respectively. If $\frac{1}{2} + \frac{\lambda}{2 \bar{v}} < \frac{K}{c_s} < \frac{n}{2} + \frac{\lambda}{2 v_l^s}$, then seeding budget is nonzero for balanced graph and hence, $q_r = (\frac{c_s}{c_q})\frac{\lambda}{2 \bar{v}}$. This implies $\|\vec{S}_r\|_1 = \frac{K}{c_s} - \frac{\lambda}{2 \bar{v}} > \frac{1}{2}$. For star graph $\frac{1}{2} + \frac{\lambda}{2 \bar{v}} < \frac{K}{c_s} < \frac{n}{2} + \frac{\lambda}{2 v_l^s}$ implies $\frac{1}{2} + \frac{\lambda}{2 v_h^s} < \frac{K}{c_s}$. Therefore, the central agent in star graph must be seeded, i.e. $S_{a_1}=\frac{1}{2}$. If $S_{a_2}=0$, then $\| \vec{S}_s\|_1 = \frac{1}{2}$ and clearly $\| \vec{S}_s\|_1 < \| \vec{S}_r\|_1$. If $S_{a_2} > 0$, then a noncentral agent must be seeded and we must have $q_s = (\frac{c_s}{c_q})\frac{\lambda}{2 v_l^s}$. This implies $q_r < q_s$ and as a result $\| \vec{S}_s\|_1 < \| \vec{S}_r\|_1$.

If $ \frac{\lambda}{2 v_h^s} <\frac{K}{c_s} < \frac{1}{2} + \frac{\lambda}{2 \bar{v}}$, then we have two cases. If $\frac{K}{c_s} \leq \frac{\lambda}{2 \bar{v}}$ then $\|\vec{S}_r\|_1 = 0$. On the other hand $\|\vec{S}_s\|_1 > 0$ since $ \frac{\lambda}{2 v_h^s} <\frac{K}{c_s}$. Therefore, clearly $\| \vec{S}_r\|_1 < \| \vec{S}_s\|_1$. So let's assume $ \frac{\lambda}{2 \bar{v}} < \frac{K}{c_s} < \frac{1}{2} + \frac{\lambda}{2 \bar{v}}$. This implies seeding budget is nonzero for balanced graph and hence, $q_r = (\frac{c_s}{c_q}) \frac{\lambda}{2 \bar{v}}$. As a result, $\| \vec{S}_r\|_1 = \frac{K}{c_s} - \frac{\lambda}{2 \bar{v}}< \frac{1}{2}$. Now again consider two cases. If
$\frac{K}{c_s} < \frac{1}{2} + \frac{\lambda}{2 v_h^s}$, then $q_s = (\frac{c_s}{c_q}) \frac{\lambda}{2 v_h^s}$, and hence $q_s < q_r$. This implies $\| \vec{S}_r\|_1 < \| \vec{S}_s\|_1$. Otherwise, if $\frac{1}{2} + \frac{\lambda}{2 v_h^s} \leq \frac{K}{c_s}$ then $\| \vec{S}_s\|_1 \geq \frac{1}{2}$. As a result, again we have $\| \vec{S}_r\|_1 < \frac{1}{2} \leq \| \vec{S}_s\|_1$.

If $\frac{n}{2} + \frac{\lambda}{2 v_l^s} < \frac{K}{c_s} < \frac{n}{2} + \frac{\lambda}{2}$, then all agents in star graph are seeded up to agents maximum demand capacities which is $0.5$ for each agent. Also, since $v_l^s < \bar{v}$, we have $\frac{n}{2} + \frac{\lambda}{2 \bar{v} } < \frac{K}{c_s}$. Hence, all agents in balanced graph are also seeded up to agents maximum demand capacities. Therefore, both graphs have the same seeding budget. This completes the proof.
 %
\end{proof}
%

The next proposition provides us with a lower and an upper bound for minimum and maximum seeding budget.
In order to characterize the graphs with maximum and minimum seedings for a given budget $K$, we need to introduce a few notations first.
\begin{mydef}
Define $v_l^{max} = \max v_l$, i.e. the maximum of the $l$-th centrality $v_l$ among all possible graphs subject to $\sum v_i = \frac{2\beta n}{2\beta- \delta}$. We can see that $v^{max}_{1} = v_h^s$ and
\begin{equation} \label{l star graph}
v^{max}_{l} = \frac{n\delta}{l(2\beta - \delta)} + 1,
\end{equation}
for $l \geq 2$. Similarly, define $v_l^{min} = \min v_l$, i.e. the minimum of the $l$-th centrality $v_l$ among all possible graphs subject to $\sum v_i = \frac{2\beta n}{2\beta- \delta}$. It is easy to see that
$v^{min}_{1} = \bar{v}$, $v^{min}_{2} = v_l^s$, and $v^{min}_{l} = 1$ for $l \geq 3$.
\end{mydef}

\begin{prop} \label{max seeding nash}
Let $(l,\tilde{v}_l^{max})$ be the unique pair satisfying condition \eqref{last agent symmetric nash} where $v_{l}^{max}  \leq \tilde{v}_l^{max} \leq v_{l-1}^{max}$ and if $0 < S^{*}_{a_l}$ in \eqref{last agent symmetric nash}, then $\tilde{v}_l^{max}=v_l^{max}$.
The maximum seeding budget occurs in any graph for which $\tilde v_l = \tilde{v}_{l}^{max}$. An example for such a graph is an $l$-star graph if $\tilde{v}_l^{max}=v_{l}^{max}$ and an $(l-1)$-star graph if $\tilde{v}_l^{max}>v_{l}^{max}$.
\footnote{$\tilde{v}^{max}_n \leq v^{max}_{n}$ if $S^{*}_{a_n}= \frac{1}{2}$.}
Similarly, let $(l,\tilde{v}_l^{min})$ be the unique pair satisfying condition \eqref{last agent symmetric nash} where $v_{l}^{min}  \leq \tilde{v}_l^{min} \leq v_{l-1}^{min}$ and if $0 < S^{*}_{a_l}$ in \eqref{last agent symmetric nash}, then $\tilde{v}_l^{min}=v_l^{min}$. The minimum seeding budget occurs in any graph for which $\tilde v_l = \tilde{v}_{l}^{min}$. An example for such graphs is the balanced graph for $l = 1$, the star graph for $l = 2$, and any graph with $n-2$ agents with centrality of one for $l\geq3$.
\footnote{$\tilde{v}_n^{min} \leq v^{min}_{n}$ if $S^{*}_{a_n}= \frac{1}{2}$.}
\end{prop}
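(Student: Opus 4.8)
The plan is to collapse the entire statement into a single monotonicity principle for the equilibrium seeding budget, viewed as a functional of the centrality distribution. From the Corollary, the symmetric equilibrium seeding budget of a graph with ordered centralities $v_1\ge\cdots\ge v_n$ is
\[
\|\vec S_a^*\|_1=\frac{l-1}{2}+S_{a_l}^*=\frac{K}{c_s}-\frac{\lambda}{2\tilde v_l},
\]
so maximizing (resp. minimizing) the seeding budget is \emph{exactly} maximizing (resp. minimizing) the effective marginal centrality $\tilde v_l$. I would first reinterpret $\tilde v_l$ geometrically: define the non-increasing step function $V_v(s)=v_{\lfloor 2s\rfloor+1}$ (the centrality of the marginal seeded agent when the seeding budget is $s$) and the \emph{graph-independent} increasing curve $R(s)=\frac{\lambda}{2(K/c_s-s)}$. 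Using the concavity in the seeding budget established in the proof of the Theorem, the equilibrium $s^*$ is the unique point where the non-increasing difference $D_v(s)=V_v(s)-R(s)$ changes sign, and $\tilde v_l=R(s^*)=V_v(s^*)$ there. The two boundary conditions in \eqref{last agent symmetric nash} are precisely the two ways a decreasing step can meet an increasing curve: a genuine crossing in the interior of a step (giving $\tilde v_l=v_l$), or a sign change at a jump (giving $v_l\le\tilde v_l\le v_{l-1}$).

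The crucial observation is that $R$ does not depend on the graph while $D_v$ is non-increasing in $s$, so raising $v$ pointwise raises $V_v$, hence $D_v$, and pushes its unique sign-change point $s^*$ to the right, increasing $\tilde v_l=R(s^*)$. Thus both $\|\vec S_a^*\|_1$ and $\tilde v_l$ are monotone non-decreasing in the pointwise order on centrality vectors. Since $v_j(G)\le v_j^{max}$ and $v_j(G)\ge v_j^{min}$ for every graph $G$ by the very definition of $v^{max}$ and $v^{min}$, applying the crossing construction to the (generally non-realizable) envelope sequences yields the two-sided bound $s^*(v^{min})\le\|\vec S_a^*\|_1\le s^*(v^{max})$, and the pairs $(l,\tilde v_l^{max})$ and $(l,\tilde v_l^{min})$ are well defined and unique because the crossing is unique.

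It remains to show these envelope bounds are attained, which I exploit via the fact that $s^*$ and $\tilde v_l$ depend on the distribution only through the marginal centrality at the crossing position; a graph matching the envelope at that single position already attains the bound. For the maximum, when the crossing is interior ($\tilde v_l^{max}=v_l^{max}$), the $l$-star graph has $v_1=\cdots=v_l=v_l^{max}$ and $v_{l+1}=\cdots=v_n=1$, the value $v_l^{max}=\frac{n\delta}{l(2\beta-\delta)}+1$ being exactly what the constraint $\sum v_i=\frac{2\beta n}{2\beta-\delta}$ forces on $l$ equal hubs, so on the flat region $[0,l/2)$ its step equals $v_l^{max}$ and the crossing lands at $s^*=\frac{K}{c_s}-\frac{\lambda}{2v_l^{max}}$; when the crossing is at a jump ($\tilde v_l^{max}>v_l^{max}$, i.e. $S_{a_l}^*=0$), the $(l-1)$-star has its jump at $s=\frac{l-1}{2}$ spanning $[1,v_{l-1}^{max}]\ni\tilde v_l^{max}$, so its crossing coincides with the envelope's. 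The minimum is symmetric, realized by the balanced graph for $l=1$ (all centralities $\bar v$), the star for $l=2$ (marginal centrality $v_l^s$), and any graph with $n-2$ unit-centrality agents for $l\ge 3$. The footnote cases $S_{a_n}^*=\tfrac12$ correspond to $D_v$ remaining positive up to $s=n/2$ and are handled by the same crossing argument at the right endpoint.

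The main obstacle I anticipate is precisely this attainment step: the envelopes $v^{max}$ and $v^{min}$ are pointwise extrema taken over \emph{different} graphs and are generally not centrality vectors of any single graph, so one cannot simply plug in a graph realizing the whole envelope. The argument must isolate that only the marginal centrality at the crossing matters, and then verify case-by-case that the $l$-star / $(l-1)$-star (resp. balanced / star / unit-centrality) graphs reproduce exactly that marginal value and that their own crossing does not migrate to an earlier position, which holds because making the earlier centralities larger than the envelope only keeps $D_v$ positive longer and never creates an earlier sign change.
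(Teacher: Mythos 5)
Your proof is correct, and its skeleton coincides with the paper's: both reduce the problem via the identity $\|\vec{S}_a^*\|_1 = \frac{K}{c_s} - \frac{\lambda}{2\tilde{v}_l}$ to extremizing $\tilde{v}_l$, both rest on the pointwise domination $v_j^{min} \leq v_j(G) \leq v_j^{max}$, and both close with exactly the same attainment examples (the $l$-star versus $(l-1)$-star for the maximum; balanced, star, and unit-centrality graphs for the minimum). Where you differ is the comparison step. The paper argues by direct contradiction: it takes a seeding-maximizing graph with equilibrium pair $(l',\tilde{v}_{l'})$, observes $l' \geq l$ because the first $l-1$ agents are fully seeded under $\tilde{v}_l^{max}$, and then uses $\tilde{v}_{l'} \leq v_{l'-1}^{max} \leq v_l^{max} \leq \tilde{v}_l^{max}$ together with the budget identity to show that both pairs cannot satisfy \eqref{last agent symmetric nash} unless $l' = l$ and $\tilde{v}_{l'} \leq \tilde{v}_l^{max}$ --- in effect, your monotone comparison carried out pointwise by hand. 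You instead package this as a general crossing lemma: the equilibrium is the unique sign change of $V_v - R$, where $R(s) = \frac{\lambda}{2(K/c_s - s)}$ is graph-independent and increasing, so the crossing point (hence the seeding budget and $\tilde{v}_l = R(s^*)$) is monotone in the pointwise order on centrality vectors. This buys you a single argument covering maximum and minimum, interior and jump cases, and the footnote endpoint cases uniformly; genuine comparative statics beyond the two envelopes; and a cleaner isolation of why only the marginal centrality at the crossing matters in the attainment step, which you verify in more detail than the paper's one-line ``quite straightforward.'' Two caveats you share with the paper rather than introduce: both arguments tacitly use that the envelope sequences are non-increasing in $l$ (the paper via $v_{l'-1}^{max} \leq v_l^{max}$, you via the step-function construction, and $v_1^{max} = v_h^s \geq v_2^{max}$ does require a check), and both assert rather than prove realizability of an $l$-star with $v_1 = \cdots = v_l = v_l^{max}$. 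Finally, your appeal to concavity for uniqueness of the crossing is unnecessary: $V_v$ non-increasing against $R$ strictly increasing already yields it.
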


\begin{proof}
Let $G$ be a graph attaining the maximum seeding (thus the minimum quality) and denote its corresponding equilibrium with $(l',\tilde v_{l'})$. Note that $l'\geq l$, since in a graph with $\tilde v_l=\tilde v_l^{max}$ the first $(l-1)$ agents are fully seeded. Now, if $l'>l$, then from $\tilde v_{l'}\leq v_{l'-1}^{max}\leq v_{l}^{max}$ and $v_{l}^{max}  \leq \tilde{v}_l^{max}$ it follows that $\tilde v_{l'}\leq\tilde{v}_l^{max}$. But, then both pairs $(l',\tilde v_{l'})$ and $(l,\tilde{v}_l^{max})$ cannot satisfy \eqref{last agent symmetric nash}. Therefore, in a graph with maximum seeding we should have $l'=l$. Now, if $\tilde v_l^{max}<\tilde v_{l'}$, then $v_l^{max}<\tilde v_{l'}\leq v_{l-1}^{max}$, which contradicts the uniqueness of the pair $(l,\tilde v_l^{max})$. To complete the proof, we also need to show that $\tilde v_l=\tilde v_l^{max}$ is achievable. It is quite straightforward to show that for $\tilde{v}_l^{max}=v_{l}^{max}$ an $l$-star graph with $v_1=\ldots=v_l=v_l^{max}$, and for $\tilde{v}_l^{max}>v_{l}^{max}$ an $(l-1)$-star graph with $v_1=\ldots=v_{l-1}=v_{l-1}^{max}$ admit $(l,\tilde v_l^{max})$ as the equilibrium. The proof for the minimum seeding budget is similar.
\end{proof}

\begin{ex} \label{ex1}
As a numerical example for the minimum and maximum seeding budgets, we consider a network with $n=15$ agents with budget $K=2$, quality and seeding costs of $c_s=c_q=1$ and parameters of $\alpha = \beta = 1$ and $\delta=0.5$. For this example from equations \eqref{lambda} we have $\lambda= 5$. As a result, we can see that for $l=3$ and $v^{max}_{3}=\frac{8}{3}$ from \eqref{l star graph}, condition $0 < S^{*}_{a_3}=\frac{1}{16} < \frac{1}{2}$ in \eqref{last agent symmetric nash} is satisfied. Therefore, a graph with the maximum seeding budget is a $3$-star with the seeding budget of $\frac{17}{16}$ as illustrated in Fig.~\ref{fig:3star}.
Also, we can see that for $l=1$ and $\bar v=\frac{4}{3}$, condition $0 < S^{*}_{a_1}=\frac{1}{8} < \frac{1}{2}$ in \eqref{last agent symmetric nash} is satisfied. Thus, balanced graph has the minimum seeding budget of $\frac{1}{8}$. Given $v_h^s = 4.8$ and $v_l^s = 1.08$, it can be seen that in star graph $\tilde{v}_2=\frac{5}{3}$ and star graph has a seeding budget of $0.5$ which is neither a minimum nor a maximum.
\end{ex}
\begin{figure}
\centering
\includegraphics[scale=2]{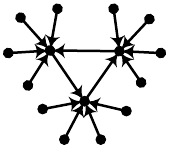}
\caption{A graph with maximum seeding budget}\label{fig:3star}
\end{figure}

As we saw, the structure of the graphs with minimum and maximum seeding budget depends on the budget. However, for certain values of budget $K$ the seeding budget will be independent of the structure of the graph, as described in the next proposition.
\begin{prop} \label{indep of network nash}
If $\frac{K}{c_s} < \frac{\lambda}{2v_h^{s}}$ no graph can be seeded. On the other hand, if $\frac{K}{c_s} > \frac{n}{2} + \frac{\lambda}{2}$ all graphs can be seeded up to agents maximum demand capacities.
\end{prop}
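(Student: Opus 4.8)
The plan is to reduce both claims to the closed-form characterization of the symmetric Nash equilibrium given in the corollary (equations \eqref{Symmetric Nash} and \eqref{last agent symmetric nash}), and then to optimize the resulting per-graph budget thresholds over all admissible graphs using the centrality bounds already established. The guiding observation is that ``no seeding'' corresponds to the corner $l=1$ with $S_{a_1}^*=0$, while ``full seeding'' corresponds to the corner $l=n$ with $S_{a_n}^*=\frac12$; for each corner I would extract the exact condition on $K/c_s$ under which that corner is the equilibrium of a given graph, and then invoke uniqueness of the symmetric equilibrium to conclude. The extremal step is then a one-line comparison using $v_1\le v_h^s$ (Lemma~\ref{v_Lv_H}) for the first statement and $v_n\ge 1$ for the second.

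For the first statement, I would characterize zero seeding directly from \eqref{Symmetric Nash}: seeding is zero exactly when $l=1$ and $S_{a_1}^*=0$, in which case the entire budget goes to quality, $c_q q^*=K$, so that the implied threshold centrality is $\tilde v_1=\frac{\lambda}{2}\frac{c_s}{c_q}\frac{1}{q^*}=\frac{\lambda c_s}{2K}$. The optimality condition \eqref{FOC tilde} for the unseeded top agent requires $v_1\le\tilde v_1$, which rearranges to the clean per-graph criterion $\frac{K}{c_s}\le\frac{\lambda}{2v_1}$. By Lemma~\ref{v_Lv_H}, $v_1\le v_h^s$ for every graph, hence $\frac{\lambda}{2v_h^s}\le\frac{\lambda}{2v_1}$, so the hypothesis $\frac{K}{c_s}<\frac{\lambda}{2v_h^s}$ forces $\frac{K}{c_s}<\frac{\lambda}{2v_1}$ for every graph and no agent is ever seeded. (This is the same threshold already used implicitly in the proof of Proposition~\ref{compare seeding nash} for the star graph.)

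For the second statement I would work at the opposite corner $l=n$, $S_{a_n}^*=\tfrac12$, where the footnote convention to \eqref{last agent symmetric nash} gives $\tilde v_n\le v_n$. Setting $S_{a_n}^*=\frac{K}{c_s}-\frac{n-1}{2}-\frac{\lambda}{2\tilde v_n}=\frac12$ yields $\tilde v_n=\frac{\lambda}{2\left(\frac{K}{c_s}-\frac{n}{2}\right)}$, and imposing $\tilde v_n\le v_n$ rearranges to the per-graph full-seeding criterion $\frac{K}{c_s}\ge\frac{n}{2}+\frac{\lambda}{2v_n}$. The key supporting fact is that $v_i\ge 1$ for every agent in every graph, which follows from $v=\vec 1+\delta W^T v$ with $\delta W^T$ entrywise nonnegative, so that $v\ge\vec 1$; in particular $v_n\ge 1$ and thus $\frac{\lambda}{2v_n}\le\frac{\lambda}{2}$. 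Consequently the hypothesis $\frac{K}{c_s}>\frac{n}{2}+\frac{\lambda}{2}$ implies $\frac{K}{c_s}>\frac{n}{2}+\frac{\lambda}{2v_n}$ for every graph, so the full-seeding corner satisfies the equilibrium conditions and, by uniqueness, is the equilibrium.

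The main obstacle I anticipate is not the centrality comparisons, which are immediate, but justifying that the derived per-corner inequalities are \emph{sufficient} and not merely necessary: one must argue that when $\frac{K}{c_s}$ exceeds the full-seeding threshold the candidate allocation $S_{a_i}^*=\frac12$ for all $i$ (with the residual budget on quality) genuinely satisfies every condition in \eqref{last agent symmetric nash} together with its footnote, and similarly that the all-quality allocation satisfies \eqref{FOC tilde} in the no-seeding regime. I would handle this by verifying the candidate against the equilibrium conditions and appealing to the uniqueness of the symmetric Nash equilibrium from the corollary, which guarantees that a verified candidate is the unique equilibrium and rules out any intermediate corner.
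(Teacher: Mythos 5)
Your proposal is correct and follows essentially the same route as the paper: both reduce the two claims to the per-graph threshold conditions $\frac{K}{c_s}\le\frac{\lambda}{2v_1}$ and $\frac{K}{c_s}\ge\frac{n}{2}+\frac{\lambda}{2v_n}$ extracted from \eqref{last agent symmetric nash} (and its footnote), and then bound the centralities via $v_1\le v_h^{s}$ from Lemma~\ref{v_Lv_H} and $v_i\ge 1$ for all $i$. Your only addition is to make explicit the sufficiency step---verifying the zero-seeding and full-seeding corner candidates and invoking uniqueness of the symmetric equilibrium---which the paper's terser proof leaves implicit.
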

\begin{proof}
The maximum possible centrality happens for the central agent of the star graph as shown in Lemma~\ref{v_Lv_H}. As a result, if $\frac{K}{c_s} < \frac{\lambda}{2v_h^{s}}$, then we have $\frac{K}{c_s} < \frac{\lambda}{2v_i}$ for all $i$ in any graph and from condition \eqref{last agent symmetric nash} no agent can be seeded. Also, since from definition $1 \leq v_i$ for all $i$, if $\frac{K}{c_s} > \frac{1}{2} + \frac{\lambda}{2}$ then we have $\frac{K}{c_s} > \frac{n}{2} + \frac{\lambda}{2v_i}$ for all agents in any graph, and any graph can be seeded up to agents' maximum demand capacities which is $0.5$ for each agent.
\end{proof}

\section{Seeding Versus Quality Improvement}  \label{sec4}

In this section we describe a scenario in which firms have already produced their products with some preset quality. We assume at some point in time, say $t=0$, firms learn about the network structure and utilize a fixed budget to maximize their marginal utility by either marginally ``improving the quality'' of their products or new seeding some agents to change their consumption towards their products or both. Since the products have been in the market for a while, we assume agents have already decided on their consumption from products $a$ and $b$ which are denoted by $\vec{x}(0)$ and $\vec{1}-\vec{x}(0)$ respectively. Each firm has a limited budget, i.e. $K_a$ and $K_b$, that can spend on either new seeding, i.e. $\vec{S}_a$ and $\vec{S}_b$, or enhancing the quality of its product, i.e. $\Delta q_a$ and $\Delta q_b$, or both. New seeding $\vec{S}_a$ and $\vec{S}_b$ will change the initial consumption of products $a$ and $b$ by $\vec{S}_a-\vec{S}_b$ and $\vec{S}_b-\vec{S}_a$ respectively. In order to have $0 \leq x_i(0) \leq 1$ and $0 \leq 1 - x_i(0) \leq 1$ for all agents $i$, we impose the constraints $\|\vec{S}_a+\vec{y}(0)\|_\infty\leq 0.5$ and $\|\vec{S}_b-\vec{y}(0)\|_\infty\leq 0.5$. This means that firms can initially seed agents up to their demand capacity. From equation \eqref{payoff} the marginal change in the utility of firm $a$ and $b$ resulted from the new budget $K_a$ and $K_b$ are given by
\begin{equation*}
\begin{split}
&\Delta U_a=v^T\vec{S}_a - v^T\vec{S}_b + \frac{2\lambda q_b\Delta q_a}{(q_a+q_b)^2} - \frac{2\lambda q_a\Delta q_b}{(q_a+q_b)^2}, \\
&\Delta U_b=v^T\vec{S}_b - v^T\vec{S}_a + \frac{2\lambda q_a\Delta q_b}{(q_a+q_b)^2} - \frac{2\lambda q_b\Delta q_a}{(q_a+q_b)^2}.
\end{split}
\end{equation*}
We assume the cost of improving the quality by $\Delta q$ is given by $c_q \Delta q$ and $c_q$ is a large number, and also the cost of each unit of new seeding is given by $c_s$. Each firm maximizes its marginal utility given its fixed budget. Since the effect of the action of firm $b$, i.e. $\vec{S}_b$ and $\Delta q_b$, is decoupled from that of the action of firm $a$ in $\Delta U_a$, thus firm $a$ should solve the following optimization problem
\begin{equation}  \label{firms utility function a}
\begin{split}
& \max_{\vec{S}_a, \Delta q_a} \quad v^T\vec{S}_a+\frac{2\lambda q_b\Delta q_a}{(q_a+q_b)^2}, \\
& \text{s.t.} \quad c_s \|\vec{S}_a\|_1 + c_q \Delta q_a=K_a.
\end{split}
\end{equation}
Similarly, for the firm $b$ we have
\begin{equation}  \label{firms utility function b}
\begin{split}
& \max_{\vec{S}_b, \Delta q_b} \quad v^T\vec{S}_b+\frac{2\lambda q_a\Delta q_b}{(q_a+q_b)^2}, \\
& \text{s.t.} \quad c_s \|\vec{S}_b\|_1 + c_q \Delta q_b=K_b.
\end{split}
\end{equation}
From equations \eqref{firms utility function a} and \eqref{firms utility function b} it can be seen that the optimal strategy of each firm is independent of the action of the other firm. It is to be noted that despite the independence of the actions, the optimal strategy of each firm depends on the state (i.e., quality) of the rival firm. This results in a Nash equilibrium to be simply the pair of the optimal actions of the firms. In the next Theorem we describe a simple rule for the optimal allocation of the budget for each firm and discuss the resulting Nash equilibrium.



\begin{thm} \label{thm 2}
For firm $a$, it is more profitable to seed agent $j$ rather than enhancing the quality of its product if $v_j > v_c^a$ where
\begin{equation} \label{condition}
v_c^a \triangleq (2\lambda) (\frac{c_s}{c_q}) (\frac{q_b}{(q_a+q_b)^2}).
\end{equation}
Similarly, for firm $b$, it is more profitable to seed agent $j$ rather than enhancing the quality of its product if $v_j > v_c^b$ where
\begin{equation} \label{condition2}
v_c^b \triangleq (2\lambda) (\frac{c_s}{c_q}) (\frac{q_a}{(q_a+q_b)^2}).
\end{equation}
Moreover, any pair of the optimal strategies of the firms described by the above threshold rules describes a Nash equilibrium.
\end{thm}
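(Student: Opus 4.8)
The plan is to recognize firm $a$'s problem \eqref{firms utility function a} as a linear allocation problem with a single budget constraint and to solve it by a greedy (exchange) argument based on marginal utility per unit of budget. First I would make the feasible set explicit: the decision variables are the per-agent seeding amounts $S_{a_j}$ together with the quality increment $\Delta q_a$, subject to $S_{a_j}\ge 0$, the capacity bound $S_{a_j}\le \tfrac{1}{2}-y_j(0)$ implied by $\|\vec S_a+\vec y(0)\|_\infty\le \tfrac12$, $\Delta q_a\ge 0$, and the (tight) budget constraint $c_s\sum_j S_{a_j}+c_q\Delta q_a=K_a$. Crucially, because $\Delta q_a$ is marginal, the objective $v^T\vec S_a+\tfrac{2\lambda q_b}{(q_a+q_b)^2}\Delta q_a$ has already been linearized, with $q_a,q_b$ entering only as the fixed preset qualities; hence both objective and constraints are linear and the problem is a fractional-knapsack-type linear program.

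Next I would compute the marginal gain per dollar for each use of the budget. Spending one unit of budget on agent $j$ buys $1/c_s$ units of seeding and thus yields marginal utility $v_j/c_s$, whereas spending one unit on quality yields $\tfrac{1}{c_q}\cdot\tfrac{2\lambda q_b}{(q_a+q_b)^2}$. Since all these rates are positive, the budget constraint binds and no budget is wasted. The core of the argument is then an exchange inequality: at any optimum, if some agent $j$ whose seeding rate strictly exceeds the quality rate were not seeded to capacity while positive budget were devoted to quality, transferring an infinitesimal amount of budget from quality to seeding $j$ would strictly raise the objective, a contradiction; the symmetric transfer rules out seeding an agent whose rate lies below the quality rate while quality could still be increased. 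Rearranging the comparison $v_j/c_s > \tfrac{2\lambda q_b}{c_q(q_a+q_b)^2}$ gives exactly $v_j>v_c^a$ with $v_c^a$ as in \eqref{condition}, so seeding agent $j$ dominates quality improvement precisely when $v_j>v_c^a$. Running the identical argument with $q_a$ and $q_b$ interchanged yields the threshold $v_c^b$ of \eqref{condition2} for firm $b$.

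Finally, for the Nash-equilibrium claim I would invoke the decoupling already noted after \eqref{firms utility function b}: firm $a$'s objective depends on firm $b$ only through the fixed preset quality $q_b$, and not through $b$'s actions $(\vec S_b,\Delta q_b)$, and symmetrically for firm $b$. Consequently each firm's threshold strategy is a best response no matter what the rival does, i.e. a dominant strategy, and the pair of threshold-optimal actions is automatically a Nash equilibrium.

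The step I expect to require the most care is not the exchange argument itself but the boundary bookkeeping: handling the per-agent seeding capacities $\tfrac12-y_j(0)$, the case where the budget is exhausted before the quality option is ever invoked (or, conversely, where every above-threshold agent is saturated and the remainder flows to quality), and the degenerate ties $v_j=v_c^a$ at which seeding and quality are equally profitable. These cases pin down the exact split of $K_a$ but leave the threshold rule intact, which is precisely what the theorem asserts.
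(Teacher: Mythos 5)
Your proposal is correct and follows essentially the same route as the paper: compare the marginal utility per unit cost of seeding agent $j$, namely $v_j/c_s$, against that of quality improvement, $\frac{2\lambda q_b}{c_q(q_a+q_b)^2}$, to obtain the threshold $v_c^a$ (and symmetrically $v_c^b$), then use the decoupling of the firms' objectives to conclude that the pair of threshold-optimal strategies is a Nash equilibrium. The only difference is presentational: you formalize the paper's one-line ratio comparison as a fractional-knapsack exchange argument and spell out the boundary cases, which the paper leaves implicit.
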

\begin{proof}
From equation \eqref{firms utility function a} and \eqref{firms utility function b} the relative marginal utility to cost for spending budget to seed agent $j$ is $\frac{v_j}{c_s}$. Therefore, it is always more profitable to seed an agent with higher centrality. Also, the relative marginal utility to cost for spending budget on enhancing quality of product $a$ is $\frac{2 \lambda q_b}{c_q(q_a+q_b)^2}$ according to \eqref{firms utility function a}. Therefore, for firm $a$ it is more profitable to seed agent $j$ rather than enhancing the quality of its product iff
\begin{equation*}
\frac{v_j}{c_s} > \frac{2 \lambda q_b}{c_q(q_a+q_b)^2}.
\end{equation*}
This completes the proof. Similar story holds for firm $b$. Moreover, since the best response of each firm resulting from equations \eqref{firms utility function a} and \eqref{firms utility function b} is independent of the action of the other firm, any Nash equilibrium of the game between firms is simply a pair of firms best responses.
\end{proof}

\begin{cor}
If firms have equal qualities $q_a=q_b=q$, for both firms $a$ and $b$, it is more profitable to seed agent $j$ rather than enhancing the quality of their products if $v_j > v_c$ where
\begin{equation} \label{condition symmetric}
v_c \triangleq (\frac{\lambda}{2}) (\frac{c_s}{c_q}) (\frac{1}{q}).
\end{equation}
Moreover, any pair of the optimal strategies of the firms described by the above threshold rules describes a Nash equilibrium.
\end{cor}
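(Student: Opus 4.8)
The plan is to obtain this corollary as an immediate specialization of Theorem~\ref{thm 2} to the symmetric case $q_a = q_b = q$, with the only real work being an arithmetic simplification of the two thresholds. First I would recall that Theorem~\ref{thm 2} has already established the per-firm threshold rule: firm $a$ prefers seeding agent $j$ to improving quality exactly when $v_j > v_c^a$, with $v_c^a$ given by \eqref{condition}, and symmetrically firm $b$ uses the threshold $v_c^b$ given by \eqref{condition2}. Since the present corollary merely restricts to equal qualities and does not alter the optimization problems \eqref{firms utility function a} and \eqref{firms utility function b}, the entire content reduces to evaluating these two thresholds at $q_a = q_b = q$.

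The key computational step is the substitution. Setting $q_a = q_b = q$ in \eqref{condition} gives $v_c^a = (2\lambda)(c_s/c_q)\bigl(q/(q+q)^2\bigr) = (2\lambda)(c_s/c_q)\bigl(q/(4q^2)\bigr) = (\lambda/2)(c_s/c_q)(1/q)$, which is precisely the claimed common threshold $v_c$ in \eqref{condition symmetric}. An identical computation applied to \eqref{condition2} yields the same value for $v_c^b$. Thus under equal qualities the two firm-specific thresholds coincide, $v_c^a = v_c^b = v_c$, so a single threshold governs the seeding-versus-quality decision for both firms at once, establishing the first assertion.

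For the Nash equilibrium claim I would reuse the decoupling already argued in the proof of Theorem~\ref{thm 2}: the objective \eqref{firms utility function a} of firm $a$ involves none of firm $b$'s decision variables $(\vec{S}_b,\Delta q_b)$, and symmetrically for firm $b$, so each firm's optimal threshold strategy is a best response independent of the rival's action. Consequently any pair of strategies obeying the common threshold rule is mutually a best response and hence a Nash equilibrium.

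I expect no genuine obstacle here, as the statement is a routine specialization rather than a new result; the lone substantive point is the simplification $q/(2q)^2 = 1/(4q)$, which collapses $v_c^a$ and $v_c^b$ to the single value $v_c$. The only thing worth flagging explicitly is that it is exactly this symmetry $v_c^a = v_c^b$ under equal qualities that makes a single shared threshold meaningful for both firms; everything else is inherited verbatim from Theorem~\ref{thm 2}.
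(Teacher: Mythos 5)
Your proposal is correct and matches the paper's intent exactly: the paper states this corollary without proof precisely because it is the immediate specialization of Theorem~\ref{thm 2} to $q_a=q_b=q$, where the substitution $q/(q+q)^2 = 1/(4q)$ collapses $v_c^a$ and $v_c^b$ to the common value $v_c = (\lambda/2)(c_s/c_q)(1/q)$, and the Nash claim is inherited from the decoupling argument already given there. No gaps.
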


\begin{rem}
If we compare the thresholds
\begin{equation} \label{threshold}
v_c^a \triangleq (2\lambda) (\frac{c_s}{c_q}) (\frac{q_b}{(q_a+q_b)^2}), \qquad \quad
v_c^b \triangleq (2\lambda) (\frac{c_s}{c_q}) (\frac{q_a}{(q_a+q_b)^2}),
\end{equation}
with qualities in Section \ref{sec3}
\begin{equation} \label{qualities}
q_a^*= (2\lambda)(\frac{c_s}{c_q})(\frac{\tilde{v}_l}{(\tilde{v}_k+\tilde{v}_l)^2}), \qquad \quad
q_b^*= (2\lambda)(\frac{c_s}{c_q}) (\frac{\tilde{v}_k}{(\tilde{v}_k+\tilde{v}_l)^2}),
\end{equation}
we can see a similarity as follows: In equation \eqref{threshold}, $q_a$ and $q_b$ determine $v_c^a$ and $v_c^b$ which in turn determine the trade off between $\vec{S}$ and $\Delta q$ according to Theorem \ref{thm 2}. In Section \ref{sec3}, $K_a$ and $K_b$ determine $\tilde{v}_k$ and $\tilde{v}_l$ based on the inequalities in \eqref{last agent} and $\tilde{v}_k$ and $\tilde{v}_l$ determine $q_a^*$ and $q_a^*$ according to \eqref{qualities}, which in turn determine the trade of between $\vec{S}$ and $q$ based on the budget constraint.
Therefore, as it will be discussed later, we can achieve similar results for the effect of $q_a$ and $q_b$ on the optimal budget allocation, as we did for the effect of $K_a$ and $K_b$ on the Nash equilibrium.
\end{rem}

Following the above theorem, the optimal allocation of the budget for each firm is to follow a so called water-filling strategy, that is, to start seeding in the order of agents' centralities until the centrality falls below the threshold given by \eqref{condition} for firm $a$ or \eqref{condition2} for firm $b$ (in which case the firm spends the rest of the budget on improving the quality), or the firm runs out of budget. Also, the amount that agents can be seeded is up to their demand capacity, i.e. $\vec{S}_a^{max} = (0.5)\vec{1} - \vec{y}(0) > 0$ and $\vec{S}_b^{max} = (0.5)\vec{1} + \vec{y}(0) > 0$. Also, note that if the centrality of any agent is equal to the threshold defined in \eqref{condition} or \eqref{condition2},
then firms are indifferent between seeding that agent and quality improvement. Equations \eqref{condition} and \eqref{condition2} indicate that the optimal allocation depends on quality of products, i.e. $q_a$ and $q_b$ and centrality distribution of agents in the network, i.e. $v$.
In what follows, we will study the effect of each of these factors
on the optimal allocation of the firms in more details. All of our analysis here is for firm $a$ and similar results can be shown for firm $b$ as well. For simplicity,
we only discuss optimal seeding budget; optimal quality improvement budget can be found easily using the budget constraint.


\subsection{Effect of Quality of Products on Firms' Decisions:}
In this subsection we study how the quality of each product, i.e. $q_a$ and $q_b$, can influence the optimal allocation of seeding and quality improvement budgets.

As it can be seen from equation \eqref{condition}, the threshold $v_c^{a}$ depends on both firm's and its rival's qualities, i.e. both $q_a$ and $q_b$. In the next proposition, we compare the seeding budget of two firms in the optimal allocation with respect to their qualities.
\begin{prop} \label{prop compare allocation}
Given an equal budget, the firm with higher quality also has higher seeding budget, i.e. if $q_a\leq q_b$, then $\|\vec{S}_a^*\|_1\leq\|\vec{S}_b^*\|_1$.
\end{prop}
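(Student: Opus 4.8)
The plan is to translate the quality ordering $q_a\le q_b$ into an ordering of the two seeding thresholds and then lean on the water-filling characterization described right after Theorem~\ref{thm 2}. First I would note that, since $q_a\le q_b$, the thresholds \eqref{condition} and \eqref{condition2} satisfy $v_c^a\ge v_c^b$: both equal $2\lambda\frac{c_s}{c_q}$ times a fraction over the common denominator $(q_a+q_b)^2$, and the numerator of $v_c^a$ is $q_b\ge q_a$, the numerator of $v_c^b$. The economic content is that the lower-quality firm $a$ has the larger marginal return $\frac{2\lambda q_b}{(q_a+q_b)^2}$ from raising its quality, so it demands a more central agent before seeding becomes worthwhile.

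Next I would invoke the water-filling rule: each firm seeds agents in decreasing order of centrality $v_1\ge\cdots\ge v_n$, seeding agent $j$ only while $v_j$ exceeds its own threshold and while budget remains, spending any leftover on quality. Because $v_c^a\ge v_c^b$ and both firms see the same ordered centralities and the same budget $K$, the profitable set $\{j:v_j>v_c^a\}$ for firm $a$ is a prefix contained in the set $\{j:v_j>v_c^b\}$ for firm $b$. Writing $\|\vec S_a^*\|_1=\min\{\sum_{j:v_j>v_c^a}(0.5-y_j(0)),\,K/c_s\}$ and $\|\vec S_b^*\|_1=\min\{\sum_{j:v_j>v_c^b}(0.5+y_j(0)),\,K/c_s\}$, the whole statement reduces to comparing these two minima, the $K/c_s$ branch occurring when the budget is exhausted before all profitable agents are seeded.

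The hard part will be the asymmetry of the seeding capacities $\vec S_a^{max}=0.5\,\vec 1-\vec y(0)$ and $\vec S_b^{max}=0.5\,\vec 1+\vec y(0)$: on an agent seeded by both firms the two injections differ, so the mere nesting of the seeded \emph{sets} does not by itself deliver $\|\vec S_a^*\|_1\le\|\vec S_b^*\|_1$. I would decompose $\sum_{j:v_j>v_c^b}(0.5+y_j(0))-\sum_{j:v_j>v_c^a}(0.5-y_j(0))$ into the contribution of the extra agents with $v_c^b<v_j\le v_c^a$, which is nonnegative since each term equals $x_j(0)\ge0$, plus the term $2\sum_{j:v_j>v_c^a}y_j(0)$ coming from the commonly seeded agents. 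In the baseline case where agents start indifferent, $\vec y(0)=\vec 0$, both capacities equal $0.5$, the second term vanishes, and the set-nesting gives the result at once; the same holds whenever both budgets are exhausted so that each side equals $K/c_s$. In the general mixed case the delicate point is controlling the sign of $\sum_{j:v_j>v_c^a}y_j(0)$, and I would reconcile it using the equal-budget identity $c_s\|\vec S_a^*\|_1+c_q\Delta q_a^*=c_s\|\vec S_b^*\|_1+c_q\Delta q_b^*$ together with $v_c^a\ge v_c^b$; this capacity bookkeeping is the one genuinely nontrivial step, everything else being an immediate consequence of the threshold inequality $v_c^a\ge v_c^b$ and the water-filling structure.
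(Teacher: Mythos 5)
Your opening move --- from \eqref{condition} and \eqref{condition2}, $q_a\le q_b$ gives $v_c^b\le v_c^a$, hence the set of agents worth seeding for firm $a$ is nested inside that of firm $b$ --- \emph{is} the paper's entire proof: the paper concludes $\|\vec{S}_a^*\|_1\le\|\vec{S}_b^*\|_1$ immediately from ``more agents satisfy the condition for firm $b$,'' with no discussion of demand capacities at all. In the symmetric case $\vec{y}(0)=\vec{0}$, where both firms face per-agent capacity $0.5$, your formula $\|\vec{S}_a^*\|_1=\min\{\,0.5\,|\{j:v_j>v_c^a\}|,\;K/c_s\,\}$ and its analogue for firm $b$ make that one-line conclusion rigorous, and to that extent your argument coincides with (and slightly sharpens) the paper's.

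The capacity asymmetry you flag in your third paragraph is a genuine issue that the paper silently ignores, but your proposed repair cannot work. The ``equal-budget identity'' is just the two budget constraints, each equal to $K$; subtracting them yields $c_s(\|\vec{S}_a^*\|_1-\|\vec{S}_b^*\|_1)=c_q(\Delta q_b^*-\Delta q_a^*)$, which restates what is to be proved rather than providing leverage, and nothing in the hypotheses controls the sign of $\sum_{j:v_j>v_c^a}y_j(0)$. Indeed no bookkeeping can close this gap, because the capacity-aware strengthening is false: take one agent with $v_1$ above both thresholds, all other centralities below $v_c^b$, and $y_1(0)=-0.4$. Then firm $a$'s capacity on agent $1$ is $0.5-y_1(0)=0.9$ while firm $b$'s is $0.5+y_1(0)=0.1$, so with $K/c_s=0.5$ firm $a$ spends its whole budget seeding ($\|\vec{S}_a^*\|_1=0.5$) while firm $b$ hits its capacity at $0.1$ and diverts the remainder to quality improvement --- giving $\|\vec{S}_a^*\|_1>\|\vec{S}_b^*\|_1$ despite $q_a<q_b$. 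The correct resolution is therefore not a missing lemma but a scoping observation: the proposition holds as stated in the regime the paper implicitly works in (symmetric capacities, e.g.\ $\vec{y}(0)=\vec{0}$, or whenever capacities do not bind, or when both firms exhaust their budgets on seeding), where your set-nesting argument already constitutes the whole proof; the general version you set out to prove in your final paragraph does not hold.
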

\begin{proof}
From equations \eqref{condition} and \eqref{condition2} it can be easily seen that if $q_a\leq q_b$, then $v_c^{b}\leq v_c^{a}$. As a result, more agents satisfy the condition \eqref{condition2} for firm $b$ compared to firm $a$ and therefore, $\|\vec{S}_a^*\|_1\leq\|\vec{S}_b^*\|_1$.
\end{proof}

This result is due to diminishing return of quality which means if a firm already has a high quality it would profit less by spending on quality improvement and it would be better for the firm to invest on seeding. Also, note that the result of Proposition \ref{prop compare allocation} is similar to the result of Proposition \ref{prop compare nash}. The only difference is that instead of budgets $K_a$ and $K_b$ in Proposition \ref{prop compare nash}, qualities $q_a$ and $q_b$ in Proposition \ref{prop compare allocation} play the role of the budgets while comparing the seedings of the firms.

In the next proposition we explain how the optimal seeding budget vary with $q_a$ and $q_b$.
\begin{prop} \label{prop quality}
Given a fixed graph, the optimal seeding budget is an increasing function of $q_a$. Furthermore, it is a decreasing function of $q_b$ if $q_b \leq q_a$ and an increasing function of $q_b$ if $q_a \leq q_b$.
\end{prop}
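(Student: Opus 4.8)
The plan is to reduce the entire statement to a single monotonicity fact---that the optimal seeding budget depends on $(q_a,q_b)$ only through the threshold $v_c^a$ of \eqref{condition} and is monotone in it---and then to track how $v_c^a$ itself moves with $q_a$ and $q_b$ by an elementary sign computation. First I would record the structural consequence of the water-filling rule established in Theorem \ref{thm 2}: firm $a$ seeds exactly those agents whose centrality exceeds $v_c^a$, each up to its demand capacity $\tfrac{1}{2} - y_j(0)$, and spends any leftover budget on quality; if the budget is exhausted before all such agents are seeded, the entire budget goes to seeding. This yields the closed form
\[
\|\vec{S}_a^*\|_1 = \min\left\{ \frac{K_a}{c_s},\ \sum_{j:\, v_j > v_c^a} \left(\tfrac{1}{2} - y_j(0)\right) \right\}.
\]
The sum on the right is non-increasing in $v_c^a$ (raising the threshold can only drop agents from the seeded set), and taking the minimum with the constant $K_a/c_s$ preserves this. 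Hence, for a fixed graph and fixed $K_a$, the optimal seeding budget is a non-increasing function of the threshold $v_c^a$. This is the one fact on which everything rests.

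Next I would differentiate $v_c^a$ from \eqref{condition}. A direct computation gives $\partial v_c^a/\partial q_a = -(4\lambda)(c_s/c_q)\,q_b/(q_a+q_b)^3 < 0$, so $v_c^a$ strictly decreases in $q_a$; combined with the monotonicity above, $\|\vec{S}_a^*\|_1$ increases in $q_a$. Similarly $\partial v_c^a/\partial q_b = (2\lambda)(c_s/c_q)\,(q_a-q_b)/(q_a+q_b)^3$, whose sign equals that of $q_a-q_b$. Thus when $q_b \leq q_a$ the threshold rises with $q_b$ and the seeding budget falls, while when $q_a \leq q_b$ the threshold falls with $q_b$ and the seeding budget rises, which is exactly the claimed behavior.

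I expect the only delicate point to be the first step: pinning down that the seeding budget is a function of $(q_a,q_b)$ through $v_c^a$ alone and is monotone in it, including the boundary cases where the budget constraint binds (the $\min$ term becomes active) and where an agent's centrality coincides with the threshold, at which the firm is indifferent between seeding and quality improvement. Once this reduction is secured, the remainder is just the sign analysis of a single rational function, mirroring the role that $K_a$ and $K_b$ played in Proposition \ref{prop quality nash} and consistent with the comparison already drawn in Proposition \ref{prop compare allocation}.
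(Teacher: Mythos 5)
Your proposal is correct and follows essentially the same route as the paper's own proof: reduce the claim to the fact that the optimal seeding budget is monotone (non-increasing) in the threshold $v_c^a$, then determine the sign of the variation of $v_c^a$ with respect to $q_a$ and $q_b$, the latter sign being that of $q_a - q_b$. Your version is simply more explicit than the paper's---writing the seeding budget as a $\min$ of the budget cap and the capacity above the threshold, and computing the derivatives of $v_c^a$ outright where the paper says ``it is easy to see''---but the decomposition and key facts are identical.
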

\begin{proof}
The optimal seeding budget is a decreasing function of the threshold value $v_c^{a}$. %
Also, it is easy to see that the threshold value of $v_c^{a}$ is a decreasing function of $q_a$. This implies the first part of proposition. For the second part, it is easy to see that
the threshold value of $v_c^{a}$ is a decreasing function of $q_a$. This implies the first part of proposition. For the second part, it is easy to see that
the threshold value of $v_c^{a}$ is an increasing function of the quality of product $b$, when $q_b \leq q_a$ and a decreasing function of the quality of product $b$, when $q_b \geq q_a$. This completes the proof.
\end{proof}

Proposition \ref{prop quality} implies that a higher quality in a firm's product results in a higher seeding budget in the optimal allocation. This can be due to the diminishing return of quality: when quality is higher there is less need for quality improvement and it would be more profitable to spend on seeding. Furthermore, when $q_b \leq q_a$, the higher the quality of the rival firm's product, the lower the seeding budget of firm $a$, i.e., if $q_a\geq q_b'\geq q_b$ then, $\|\vec{S}_a^*(q_b')\|_1\leq\|\vec{S}_a^*(q_b)\|_1$. On the other hand, when competing with a firm whose product has a higher quality, i.e. $q_b \geq q_a$, the higher the quality of the rival firm's product, the higher firm $a$ should spend on seeding. In other words, if $q_a\leq q_b\leq q_b'$ then, $\|\vec{S}_a^*(q_b)\|_1\leq\|\vec{S}_a^*(q_b')\|_1$.

Combining these two results, we can see that given a fixed value of $q_a$, the seeding budget of firm $a$ is increasing with the difference $|q_a-q_b|$. The seeding budget attains its minimum when $q_b=q_a$, implying that the firm should allocate more budget to quality improvement to distance itself from the rival firm. However, as the gap between qualities widens, competition in qualities becomes less effective and firms spend more budget on seeding.
Also, note that the result of Proposition \ref{prop quality} is similar to the result of Proposition \ref{prop quality nash}. The only difference is that seeding budgets vary with $q_a$ and $q_b$ in Proposition \ref{prop quality}, whereas they vary with $K_a$ and $K_b$ in Proposition \ref{prop quality nash}.


\subsection{Effect of Network Structure on Firms' Decisions:}

In this subsection we study the effect of network structure on the optimal allocation of the budget for seeding and quality improvement. First we define seeding capacity of a graph.
\begin{mydef}
The seeding capacity of a graph is the amount that it can be seeded in the optimal allocation when there is no budget constraint.
\end{mydef}

We first focus on two well studied graphs, i.e. star and balanced graphs, and highlight how they can reflect  important properties of the seeding budget.
The next proposition provides a condition for seeding profitability of any general graph. Also, the seeding capacity of star and balanced graphs are compared and it is shown that the graph with higher seeding capacity can be any of the two, depending on the threshold value of $v_c^{a}$ in \eqref{condition}.
\begin{prop} \label{compare seeding}
If seeding capacity is nonzero for a balanced graph, it will be nonzero for any other graph too. On the other hand, if seeding capacity is zero for a star graph, it will also be zero for any other graph. Moreover, if $ v_l^{s} < v_c^a < \bar{v}$, a balanced graph has a larger seeding capacity than a star graph, and if $\bar{v}< v_c^a <  v_h^{s}$, a star graph has a larger seeding capacity than a balanced graph. For $1<v_c^a<v_l^{s}$ they have the same seeding capacity.
\end{prop}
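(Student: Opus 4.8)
The plan is to first convert the notion of seeding capacity into a counting statement via Theorem~\ref{thm 2}. When there is no budget constraint, firm $a$'s water-filling rule seeds every agent whose centrality strictly exceeds the threshold $v_c^a$, each up to its maximum demand capacity $0.5$, and seeds no agent with centrality below $v_c^a$. Hence the seeding capacity is positive precisely when the most central agent satisfies $v_1 > v_c^a$, and in general it equals $0.5$ times the number of agents with $v_j > v_c^a$. This reduces the entire proposition to comparing $v_c^a$ against the centralities supplied by Lemma~\ref{v_Lv_H}.

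For the first claim, a balanced graph has $v_1 = \bar v$, so a nonzero capacity means $\bar v > v_c^a$; since Lemma~\ref{v_Lv_H} gives $\bar v \le v_1$ for any graph $G$, we get $v_1 \ge \bar v > v_c^a$, so the most central agent of $G$ is worth seeding and its capacity is nonzero. For the second claim, a star graph has $v_1 = v_h^s$, so a zero capacity means $v_h^s \le v_c^a$; since Lemma~\ref{v_Lv_H} also gives $v_1 \le v_h^s$ for every $G$, no agent in $G$ can have centrality above $v_c^a$, so its capacity is zero as well.

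Next I would compute the two capacities explicitly using the chain $1 < v_l^s < \bar v < v_h^s$ (the inner inequalities follow from Lemma~\ref{v_Lv_H} together with $v_l^s<\bar v$, already invoked in Proposition~\ref{compare seeding nash}, and $v_l^s>1$ is immediate from its closed form). A balanced graph seeds all $n$ agents exactly when $v_c^a < \bar v$ and none otherwise, giving capacity $n/2$ or $0$. A star graph seeds all $n$ agents when $v_c^a < v_l^s$, only the central agent when $v_l^s < v_c^a < v_h^s$, and no agent when $v_c^a > v_h^s$, giving capacity $n/2$, $1/2$, or $0$. The three regimes then follow by direct comparison: for $v_l^s < v_c^a < \bar v$ the balanced graph seeds all $n$ agents while the star seeds only its center, so (as $n/2 > 1/2$ for $n>1$) the balanced capacity dominates; for $\bar v < v_c^a < v_h^s$ the balanced graph seeds nobody while the star still seeds its center, so the star dominates; and for $1 < v_c^a < v_l^s$ both graphs seed all $n$ agents and the capacities coincide.

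I do not expect a genuine obstacle here, since the argument is essentially a case split driven by the centrality ordering. The only points requiring care are that "larger capacity" be justified through the count of seeded agents once each contributes the fixed positive demand $0.5$, and the handling of the indifference points $v_j = v_c^a$ left undecided by Theorem~\ref{thm 2}; fortunately the three stated regimes are open intervals that avoid the boundaries $v_l^s,\bar v,v_h^s$, so no indifference case arises inside them. The single quantitative input is the strict chain $v_l^s < \bar v < v_h^s$, which guarantees $v_c^a < v_h^s$ in the first regime and is immediate from the closed forms in Lemma~\ref{v_Lv_H}.
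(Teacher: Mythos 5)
Your proof is correct and follows essentially the same route as the paper's: the first two claims via the extremal centralities $\bar v \leq v_1 \leq v_h^{s}$ from Lemma~\ref{v_Lv_H}, and the three regimes via the same case split on where $v_c^a$ falls in the chain $1 < v_l^{s} < \bar v < v_h^{s}$. The one inaccuracy is your premise that every agent's demand capacity is exactly $0.5$, so that the seeding capacity equals $0.5$ times the number of agents with $v_j > v_c^a$: in Section~\ref{sec4} agents have preexisting consumptions, and the capacities are $S^{max}_{a_i} = 0.5 - y_i(0)$, positive but generally non-uniform; accordingly, the paper's proof states the star's capacity as the central agent's $S^{max}_{a_i}$ and the balanced graph's as $\|\vec{S}^{max}_a\|_1$ rather than $1/2$ and $n/2$. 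Your comparison logic survives unchanged, since it only needs positivity of the capacities (a sum over all $n \geq 2$ agents of positive capacities exceeds a single one; both graphs seeding all agents gives the common value $\|\vec{S}^{max}_a\|_1$; zero versus one seeded agent is immediate), but the explicit values $n/2$ and $1/2$ are valid only in the special case $\vec{y}(0) = \vec{0}$, as in Example~\ref{ex2}, so you should replace them with the general capacity sums. A minor side remark: the strict inequality $v_l^{s} < \bar v$ you invoke holds for $n > 2$ (for $n = 2$ the two coincide and the corresponding open regime is empty), which does not affect the stated conclusions.
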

\begin{proof}
If seeding capacity is nonzero for a balanced graph, then we have  $v_c^a < \bar{v}$. As a result, for any other graph we will have $v_c^a <v_{max}$, where $v_{max}=\max v_i$, since according to Lemma \ref{v_Lv_H} $\bar{v}\leq v_{max}$. This means that there exists at least one agent that must be seeded. On the other hand, if seeding capacity is zero for a star graph, then we must have $v_c^a > v_h^{s}$. Since we know $v_h^{s} \geq v_i$ for any agent $i$ of any arbitrary graph, therefore, $v_c^a > v_i$ and no agent can be seeded in any other graph.
For the second part of the proposition,
if $ v_l^{s} < v_c^a < \bar{v}$, then seeding capacity for the star graph will be $S^{max}_{a_i}$, where $S^{max}_{a_1} \geq S^{max}_{a_2} \geq \cdots \geq S^{max}_{a_n}$ are elements of the demand capacity vector $\vec{S}^{max}_a$ and agent $i$ is the central agent. However, for the balanced graph all agents can be seeded up to their maximum demand capacities and the seeding capacity will be $\|\vec{S}^{max}_a\|_1$. On the other hand, if $ \bar v < v_c^a < v_h^{s}$, still seeding for the star graph will be $S^{max}_{a_i}$, however, no agent can be seeded in the balanced graph. For $1<v_c^a<v_l^{s}$, agents in both graphs can be seeded up to $\|\vec{S}^{max}_a\|_1$.
\end{proof}

It is easy to see that Proposition \ref{compare seeding} presents very similar results as Proposition \ref{compare seeding nash}.
The next proposition provides us with a lower and an upper bound for minimum and maximum seeding capacities.
\begin{prop} \label{max seeding}
If $ 1 < v_c^a < v_h^{s}$, the maximum seeding capacity is given by  \begin{equation*}
\|\vec{S}^{*}_a\|^{max}_{1} = \sum_{i=1}^{k} S^{max}_{a_i},
\end{equation*}
where
\begin{equation} \label{k}
k = \min \{\lfloor \frac{n\delta}{(v_c^a-1)(2\beta - \delta)} \rfloor, n\}.
\end{equation}
On the other hand, the minimum seeding capacity is $S_{a_n}^{max}+S_{a_{n-1}}^{max}$ if $1<v_c^a<v_l^{s}$, is $S^{max}_{a_n}$ if $ v_l^{s} < v_c^a < \bar{v}$, and is zero if $\bar{v} < v_c^a < v_h^{s}$.
\end{prop}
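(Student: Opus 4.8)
The plan is to reduce the outer maximization/minimization over network structures to a purely combinatorial question about how many of the sorted centralities can be placed above the threshold $v_c^a$, and then to pair the (fixed, sorted) demand-capacity vector with the seeded agents in the most (resp.\ least) favorable way. By Theorem~\ref{thm 2}, in the unconstrained (no budget) optimum firm $a$ seeds exactly those agents whose centrality exceeds $v_c^a$, each up to its capacity, so for a fixed graph the seeding capacity equals $\sum_{j:\,v_j>v_c^a} S^{max}_{a_j}$. Since each $S^{max}_{a_j}\ge 0$ and one is free to choose both the centrality profile (subject to $\sum_i v_i=\frac{2\beta n}{2\beta-\delta}$) and which agent sits at which position, maximizing the capacity amounts to making as many centralities as possible exceed $v_c^a$ and assigning the largest capacities to those agents; minimizing does the opposite.

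For the maximum, I would bound the number of agents that can simultaneously exceed $v_c^a$ using $v^{max}_l=\frac{n\delta}{l(2\beta-\delta)}+1$ from the definition preceding the proposition: the $l$-th highest centrality can be pushed above $v_c^a$ precisely when $v^{max}_l>v_c^a$, i.e.\ when $l<\frac{n\delta}{(v_c^a-1)(2\beta-\delta)}$, where the hypothesis $v_c^a>1$ makes the right-hand side positive and the hypothesis $v_c^a<v_h^s$ guarantees at least $l=1$ is admissible. Taking the largest admissible integer and capping at $n$ gives $k$ as in \eqref{k}, and assigning the $k$ largest capacities to these agents yields $\sum_{i=1}^{k}S^{max}_{a_i}$. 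I would then exhibit a realizing graph — a $k$-star with $v_1=\cdots=v_k=v^{max}_k>v_c^a$ and the remaining centralities equal to one — exactly as in the proof of Proposition~\ref{max seeding nash}, to show the bound is attained.

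For the minimum, the symmetric argument uses $v^{min}_l$: an agent at rank $l$ is \emph{forced} above the threshold only if even its smallest attainable centrality exceeds $v_c^a$, i.e.\ if $v^{min}_l>v_c^a$. Using $v^{min}_1=\bar v$, $v^{min}_2=v_l^s$, and $v^{min}_l=1$ for $l\ge 3$ together with $v_c^a>1$, the number of unavoidably seeded agents is zero when $\bar v<v_c^a<v_h^s$, exactly one when $v_l^s<v_c^a<\bar v$, and exactly two when $1<v_c^a<v_l^s$. To minimize, these forced-seeded agents should carry the smallest capacities, giving $0$, $S^{max}_{a_n}$, and $S^{max}_{a_n}+S^{max}_{a_{n-1}}$ respectively; the balanced graph, the star graph, and a graph with $n-2$ agents of centrality one realize these three cases.

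The main obstacle is the reduction step rather than the counting: I must justify that the optimization over admissible (row-stochastic, zero-diagonal) graphs satisfying the sum constraint decouples into choosing the centrality profile and then freely pairing it with the fixed sorted capacity vector, and that the extremal profiles identified through $v^{max}_l$ and $v^{min}_l$ are jointly realizable by a single such graph (the $k$-star and balanced/star constructions supply these). A secondary care point is the strict-versus-nonstrict boundary hidden in the floor in \eqref{k} and the capped case $k=n$, as well as checking that the $l=1$ bound $v_h^s$ is consistent with the formula $\frac{n\delta}{l(2\beta-\delta)}+1$ used for $l\ge 2$; I would dispatch these exactly as the footnoted boundary conventions (e.g.\ $S^{*}_{a_n}=\frac12$) in the preceding propositions.
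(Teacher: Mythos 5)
Your proposal is correct and follows essentially the same route as the paper: reduce to counting how many sorted centralities can lie above (for the maximum) or must lie above (for the minimum) the threshold $v_c^a$, pair the extremal capacity assignments with those agents, and exhibit the same realizing graphs (the $k$-star with remaining centralities equal to one, the balanced graph, the star graph, and a graph with $n-2$ agents of centrality one); your counting via $v_l^{max}>v_c^a$ is exactly the paper's ``give each agent centrality one and distribute the remaining mass in increments of $v_c^a-1$'' argument in order-statistic form. The boundary issues you flag (the floor, the cap $k=n$, and the $l=1$ case where $v_1^{max}=v_h^s<\frac{n\delta}{2\beta-\delta}+1$) are likewise glossed in the paper's proof, so nothing in your plan diverges from or falls short of it.
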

\begin{proof}
From condition \eqref{condition} the more agents with centralities above the threshold $v_c^{a}$, the more seeding budget can be allocated. Therefore, the maximum number of $k$ agents with centralities above the threshold $v_c^{a}$ must be found. Since $v_i \geq 1$ for all agents, first a centrality of one is given to each agent and then the remainder of the centrality sum is distributed among maximum number of agents so that each agent receives at least $v_c^{a}-1$, making its overall centrality greater than $v_c^{a}$. It is easy to see that the number of such agents is upper bounded by $\lfloor \frac{\frac{2\beta n}{2\beta- \delta} - n}{v_c^a-1}\rfloor$.
This along with the fact that $1\leq k\leq n$ results in \eqref{k}.
Note that, in order to complete the proof, we should also provide an example achieving this maximum capacity. For $k=1$, the maximum seeding capacity is clearly achieved by the star graph with the seeding capacity of $S_{a_1}^{max}$. For $k\geq2$, a graph with largest seeding capacity is the one with $k$ central agents having the largest demand capacities and with equal centralities of
\begin{equation*}
\tilde{v}_h^{s} = \frac{n\delta}{k(2\beta - \delta)} + 1,
\end{equation*}
where $k$ is given in \eqref{k}, and the remainder $n-k$ agents with the minimum centrality of $\tilde{v}_l^{s}=1$.
For the graph with minimum seeding capacity, similar to the proof of Proposition \ref{compare seeding}, we have minimum seeding capacity of $S^{max}_{a_n}$ in star graph if $ v_l^{s} < v_c^a < \bar v$, and zero in balanced graph if $ \bar v< v_c^a <v_h^{s}$.
For the case where $1<v_c^a<v_l^{s}$, let $i$ be the agent with the highest centrality. Clearly, $v_l^{s}<v_i\leq v_h^{s}$. Now, considering the fact that sum of the centralities is fixed, there is an agent $j\in V-\{i\}$ for which $v_l^{s}\leq v_j$. This implies that there exist at least two agents whose centralities are above $v_c^a$. An example of a graph with exactly two centralities above $v_c^a$ is a directed star graph where all edges are directed towards the center except one edge which goes both ways.
\end{proof}


It can be seen from both Proposition \ref{max seeding} and Proposition \ref{max seeding nash} that graphs with similar structures attain maximum and minimum seeding in both scenarios.

\begin{ex} \label{ex2}
As a numerical example for the minimum and maximum seeding capacities, we consider a network with $n=15$ agents with demand capacity vectors of $\vec{S}_a^{max}=\vec{S}_b^{max}=(0.5)\vec{1}$, qualities of $q_a=q_b=1$, quality and seeding costs of $c_s=c_q=1$ and parameters of $\alpha = \beta = 1$ and $\delta=0.5$. For this example from equations \eqref{condition} and \eqref{condition2} we have $v_c^{a}=v_c^{b}= 2.5$ and as a result, from equation \eqref{k} we get $k=3$. Therefore, a graph with the maximum seeding capacity is a $3$-star with seeding capacity of $1.5$ as illustrated in Fig.~\ref{fig:3star}. Also, since $ \bar v=\frac{4}{3} < v_c^a, v_c^b < v_h^{s} = 4.8 $, a balanced graph has the minimum seeding capacity of zero. A star graph has a seeding capacity of $0.5$ which is neither a minimum nor a maximum.
\end{ex}

Note that in both Example \ref{ex1} and Example \ref{ex2} a graph with the maximum seeding budget and capacity is a $3$-star graph and a graph with the minimum seeding budget and capacity is a balanced graph. A star graph has neither a minimum nor a maximum seeding budget and capacity in both examples.

As we saw, the structure of the graphs with minimum and maximum seeding capacity depends on the threshold value of $v_c^{a}$. However, for certain values of $v_c^{a}$ the seeding capacity will be independent of the structure of the graph, as described in the next proposition.
\begin{prop} \label{indep of network}
If $v_c^a > v_h^{s}$ no graph can be seeded. On the other hand, if $v_c^a < 1$ all graphs can be seeded equally up to agents' maximum demand capacities.
\end{prop}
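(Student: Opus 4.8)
The plan is to prove both claims directly from the threshold characterization in Theorem~\ref{thm 2}, using Lemma~\ref{v_Lv_H} to control the extremes of the centrality distribution. Recall that an agent $j$ is seeded in the optimal allocation precisely when $v_j > v_c^a$, so the seeding capacity is zero if and only if no centrality exceeds the threshold, and all agents are seeded (up to demand capacity) if and only if every centrality exceeds the threshold. Thus both statements reduce to comparing $v_c^a$ against the largest and smallest possible centralities across all graphs.

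For the first claim, I would argue that by Lemma~\ref{v_Lv_H} the maximum possible centrality of any agent in any graph is $v_h^s$, attained by the central agent of a star graph. Hence if $v_c^a > v_h^s$, then $v_c^a > v_h^s \geq v_i$ for every agent $i$ in every graph, so the condition $v_j > v_c^a$ fails for all $j$, and no agent is ever seeded regardless of network structure. This is the same reasoning already used in the second half of the proof of Proposition~\ref{compare seeding}, so it can be invoked almost verbatim.

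For the second claim, I would use the universal lower bound $v_i \geq 1$ for all agents (which follows from $v = \vec{1} + \delta W^T v$ with nonnegative $W$), established earlier in the definition of $v_l^{min}$. If $v_c^a < 1$, then $v_i \geq 1 > v_c^a$ for every agent $i$ in every graph, so the threshold condition $v_j > v_c^a$ holds for all $j$; consequently each firm seeds every agent up to its demand capacity $\vec{S}_a^{max}$, independent of the graph. The only subtlety is to note that the budget is assumed unconstrained when speaking of seeding capacity, so the sole binding constraint is the per-agent demand cap, which is exactly what the conclusion asserts.

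I do not anticipate any real obstacle here: both parts are immediate consequences of sandwiching $v_c^a$ outside the interval $[1, v_h^s]$ that contains every attainable centrality. The mild point to be careful about is distinguishing strict versus weak inequalities at the boundary (an agent with $v_j = v_c^a$ leaves the firm indifferent between seeding and quality improvement), but since the hypotheses $v_c^a > v_h^s$ and $v_c^a < 1$ are strict and $v_h^s, 1$ are themselves attainable centralities, the conclusions hold cleanly with no ambiguity.
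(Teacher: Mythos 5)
Your proposal is correct and follows essentially the same argument as the paper: both parts use Lemma~\ref{v_Lv_H} to bound $v_c^a$ above the maximal centrality $v_h^s$ (no seeding) or below the universal lower bound $v_i \geq 1$ (full seeding up to demand capacity), exactly as in the paper's proof. Your added remarks on the strictness of the inequalities and the unconstrained budget implicit in the notion of seeding capacity are consistent with the paper's intent and do not change the argument.
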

\begin{proof}
The maximum possible centrality happens for the central agent of the star graph as shown in Lemma~\ref{v_Lv_H}. As a result, if $v_c^a > v_h^{s}$, then we have $v_c^a >  v_i$ for all $i$ in any graph and from condition \eqref{condition} no agent can be seeded. Also, since from definition $1 \leq v_i$ for all $i$, if $ v_c^{a} < 1$ then we have $ v_c^{a} < v_i$ for all agents in any graph, and any graph can be seeded up to agents' maximum demand capacities, given the availability of budget.
\end{proof}

It is easy to see that Proposition \ref{indep of network} presents very similar results as Proposition \ref{indep of network nash}.

\section{Conclusion} \label{sec5}

We proposed and studied a strategic model of marketing and product consumption in social networks. Two firms 
compete for maximizing the consumption of their products in a social network.
Initially firms utilize a limited budget to either design the quality of their products or initially seed a set of agents in the social network.
Agents are myopic yet utility maximizing, given the qualities of the products and actions of their neighbors. This myopic best response results in a local, linear update dynamics for the consumptions of the agents.
We characterized the unique Nash equilibrium of the game between firms. We showed that at the Nash equilibrium, firms invest more budget on quality when their budgets are close. However, as the difference between budgets increases, firms spend more budget on seeding. We also showed that given equal budget of firms, if seeding budget is nonzero for a balanced graph it will also be nonzero for any other graph, and if seeding budget is zero for a star graph it will be zero for any other graph too.
Afterwards, we considered a different scenario in which firms produce two products with some preset qualities that can only be improved marginally.
At some point in time, firms spend a limited budget to marginally improve the quality of their products and to give free offers to a set of agents in the network in order to promote their products. We derived a simple threshold rule for the optimal allocation of the budget between new seedings and quality improvement. We showed that the optimal allocation of the budget in particular depends on the entire centrality distribution of the graph and the qualities of the products.
Furthermore, we derived similar results to the original setup for this scenario, in which preset qualities resemble
the role of budgets.

\bibliographystyle{IEEEtran}
\bibliography{CDC2014_REF}
\end{document}